\newtheoremstyle{uprightplain}
  {}
  {}
  {\normalfont}
  {}
  {\bfseries}
  {.}
  { }
  {}
\theoremstyle{uprightplain}
\newtheorem{example}{Example}
\theoremstyle{plain}
\newtheorem{theorem}{Theorem}
\newtheorem{lemma}{Lemma}
\newtheorem{corollary}{Corollary}
\begin{document}
\title{Hybrid Mono- and Bi-static OFDM-ISAC via BS-UE Cooperation: Closed-Form CRLB and Coverage Analysis}
\author{Xiaoli~Xu, \emph{Member, IEEE,}  and Yong Zeng, \emph{Fellow, IEEE}
\thanks{X. Xu and Y. Zeng are with the National Mobile Communications Research Laboratory, Southeast University, Nanjing 210096, China. Y. Zeng is also with the Pervasive Communication Research Center, Purple Mountain Laboratories, Nanjing 211111, China (email: {xiaolixu, yong\_zeng}@seu.edu.cn). {\it (Corresponding author: Yong Zeng.)}}
}

\maketitle
\begin{abstract}
This paper proposes a hybrid mono- and bi-static sensing framework, by leveraging the base station (BS) and user equipment (UE) cooperation in integrated sensing and communication (ISAC) systems. This scheme is built on 3GPP-supported sensing modes, and it does not incur any extra spectrum cost or inter-cell coordination. To reveal the fundamental  performance limit of the proposed hybrid sensing mode,  we derive closed-form Cram\'{e}r-Rao lower bound (CRLB) for sensing target localization and velocity estimation, as functions of target and UE positions. The results reveal that  significant performance gains can be achieved over the purely mono- or bi-static sensing, especially when the BS-target-UE form a favorable geometry, which is close to a right triangle. The analytical results are validated by simulations using effective parameter estimation algorithm and weighted mean square error (MSE)  fusion method. Based on the derived sensing bound, we further analyze  the sensing coverage by varying the UE positions, which shows that sensing coverage first improves then degrades as the BS-UE separation increases. Furthermore, the sensing accuracy for a potential target with best UE selection is derived as a function of the UE density in the network.
\end{abstract}
\begin{IEEEkeywords}
BS-UE cooperative ISAC, OFDM ISAC, hybrid mono- and bi-static sensing, CRLB.
\end{IEEEkeywords}

\section{Introduction}
The International Telecommunication Union Radiocommunication Sector (ITU-R) has identified integrated sensing and communication (ISAC) as one of the six key application scenarios for 6G networks \cite{ITU}. At the recent 3GPP Physical Layer Working Group meeting, Orthogonal Frequency-Division Multiplexing (OFDM) was established as the baseline waveform for 6G, with broad consensus among participants to continue using OFDM and its variants as the dominant waveform for both communication and sensing in 6G systems \cite{3gpp_ran1_122bis}. From a communication perspective, OFDM offers excellent spectral efficiency and robustness against inter-symbol interference (ISI). From a sensing standpoint, it also provides several compelling advantages, including flexible time-frequency resource allocation, efficient decoupled estimation of delay and Doppler shift \cite{4977002}, low ranging sidelobes \cite{liu2024ofdm}, and a near ``thumbtack-shaped" ambiguity function \cite{cao2016feasibility}. Owing to these dual-functional benefits, OFDM-based ISAC systems have attracted significant research interest from both academia and industry \cite{DaiQiang}.

The idea of using OFDM for radar sensing was first introduced in \cite{OFDMsensing}, and its integration with communication was subsequently explored in \cite{4058251}. The sensing performance of OFDM-ISAC is influenced by several key parameters. Specifically, the unambiguous sensing range and maximum detectable Doppler shift are determined by the OFDM subcarrier spacing and symbol duration, respectively \cite{10706597}. The random communication symbols carried by the ISAC signal act as a random mask that greatly attenuate the ISI in the sensing profile, which enhances the sensing range \cite{ISIpaper}. In terms of sensing resolution, the range resolution improves with increased bandwidth, while Doppler resolution enhances as the coherent processing interval grows longer. When the periodogram algorithm is used to estimate target parameters, the resolution is limited by the Rayleigh resolution limit. In contrast, super-resolution algorithms, such as MUltiple SIgnal Classification (MUSIC), can surpass this limit, with their ultimate resolution constrained by statistical resolution limit \cite{10474102}, which is related to the Cram\'{e}r-Rao Lower Bound (CRLB) for the parameter of interest \cite{1420803}. Besides the resolution limit, the target localization accuracy is also related with the CRLB of parameter estimation \cite{9652071}.

CRLB is a widely used metric for evaluating sensing performance. For instance, in \cite{fang2025}, it serves as the performance criterion for optimizing beamforming in mono-static sensing configurations. Similarly, \cite{9814645} employs the CRLB to assess sensing accuracy and to identify the optimal bi-static transmitter-receiver pair. In \cite{10419729}, the CRLB is derived for target localization in a cloud-radio access network (C-RAN), where a single transmitter and multiple sensing receivers are deployed. The CRLB for near-field sensing with extremely-large antenna is derived in \cite{b8}.  Recently, \cite{pucci2025} presents explicit CRLB expressions  for estimating key parameters (i.e., angle of arrival (AoA), propagation delay and Doppler shift)  of a single target in OFDM-ISAC systems, covering  both mono-static and bi-static sensing modes. Building upon the CRLB of these signal-level parameters, the authors further derive the target position error bound (PEB) and velocity error bound (VEB) using the expected Fisher information matrix and a Jacobian transformation.


In the context of cellular network based ISAC systems,  mono-static and bi-static setups represent two fundamental sensing paradigms.  Specifically, 3GPP has defined six canonical ISAC deployment scenarios \cite{DaiQiang}\cite{3GPP_ISAC}: (1) BS mono-static, (2) BS-BS bi-static, (3) UE-BS bi-static, (4) UE mono-static, (5) UE-UE bi-static, and (6) BS-UE bi-static. Mono-static and bi-static sensing share similarities with conventional radar systems in both architecture and signal processing, offering advantages such as simplified coordination and low computational complexity. However, given the limited spectrum resources typically allocated for sensing and their limited coverage range, purely mono-static or bi-static configurations may be insufficient to meet the demanding requirements of practical applications, such as autonomous driving and drone surveillance, in terms of sensing coverage and accuracy.

To better leverage the ubiquitously deployed BSs and the abundant UEs in wireless systems, some preliminary research efforts have been devoted to cooperative multi-static sensing \cite{pucci2025,10226276,10636720,10791445}. In particular, \cite{pucci2025} proposes two general cooperative sensing frameworks: (i) multiple BSs transmit orthogonal sensing signals toward a common target and receive their respective signals for target sensing, and (ii) a single BS transmits the sensing signal while multiple BSs jointly receive the echoes and collaboratively perform parameter estimation. The aforementioned works have been focused on the cooperation of BSs. This is because BSs usually possess higher sensing and computational capabilities than UEs, owing to their larger antenna arrays, backhaul connection and prior knowledge of the transmitted signal.  However, BS deployment is usually static and the backscattered signal received at the BS is usually weak for distant target, as it traverses a dual-hop path. In contrast, cellular networks serve a large number of UEs with dynamic and flexible locations. Hence, it is likely that at least some UEs will receive a stronger sensing signal when the target is in close proximity to them.

Therefore, in this paper, we propose a hybrid mono- and bi-static sensing mode based on the cooperation between BS and UE. Specifically, as illustrated in Fig.~\ref{F:model} while the BS communicates with the UE, it simultaneously performs target sensing using the backscattered echo, corresponding to the BS mono-static sensing mode defined in \cite{3GPP_ISAC}.  Meanwhile, the UE receives the downlink communication signal. After successfully decoding the data, the UE removes the random communication symbols from its received signal and carries out target parameter estimation, corresponding to the BS-UE bi-static sensing mode. Subsequently, the UE feeds its estimation results back to the BS to enhance overall sensing accuracy. Compared with existing cooperative sensing approaches, the proposed BS-UE cooperative framework offers the following advantages:
\begin{itemize}
\item It incurs no additional spectrum overhead compared to a pure communication system, as sensing is performed using the communication signal.
\item It is built upon the 3GPP-supported sensing modes and does not require inter-cell coordination, thereby ensuring low implementation complexity.
\end{itemize}

Based on CRLB for signal-level parameter estimation derived in \cite{pucci2025}, we derive an explicit CRLB expression for target localization in the proposed hybrid mono- and bi-static sensing mode. Our analysis reveals that localization accuracy can be significantly improved by fusing the delay (i.e., time-of-arrival) estimation from the UE. This enhancement is particularly pronounced when the BS, target, and UE form a favourable geometry, which is close to a right triangle. Besides, whereas pure mono- or bi-static sensing can only estimate the radial velocity, the hybrid sensing mode enables estimation of both radial and tangential velocities. We derive the explicit CRLB for velocity estimation, and hence the velocity estimation bound (VEB), as functions of the target and UE positions. The analysis reveals that the estimation accuracy is independent of the ground-truth  velocity value, and  the optimal bi-static angle is derived as a function of the ratio between Doppler shift estimation quality at the BS and UE.  We verify the analytical results through simulations using a practical sequential parameter estimation method based on the fast Fourier transform (FFT) and a weighted mean square error (MSE) fusion method. A close match between the analytical PEB/VEB and the simulated position/velocity estimation errors is observed when the signal-to-noise ratio (SNR) is sufficiently high. The analysis provides useful guidance for practical network design, and two use cases are discussed, i.e., network sensing coverage analysis and the impact of user density on sensing performance. The main contributions of this paper are summarized as follows.
\begin{itemize}
\item {\it Closed-Form CRLB Expression}: We derive the CRLB for target position and velocity estimation under hybrid mono- and bi-static sensing. Closed-form expressions are provided as functions of the target and UE positions. Our analysis reveals that the improvement in sensing accuracy achieved by fusing UE-based estimates depends critically on both the quality of the UE's measurements and the geometric configuration of the BS-Target-UE triangle. We further derive the fundamental sensing limit when the UE approaches the target arbitrarily closely, and validate our analytical results through simulations.

\item {\it Coverage Analysis}: Building on the CRLB analysis, we compare the sensing coverage of conventional BS mono-static sensing and hybrid sensing for a fixed UE position. Sensing coverage is defined as the area where the targets may locate so that the PEB falls below a given threshold. Our results show that sensing coverage initially expands as the distance between the BS and UE increases, but begins to shrink once this separation exceeds an optimal value. This non-trivial behavior offers valuable guidance for network design, particularly in the strategic placement of dedicated sensing nodes within existing cellular infrastructure.

\item {\it Impact of User Density}: We further assess sensing performance for a given target by selecting the best cooperating UE from a spatially distributed population. Interestingly, the results demonstrate that the target's PEB can satisfy (i.e., fall below) a predefined requirement if at least one UE resides within an "8"-shaped admissible region surrounding the target. We analytically derive the area of this region as a function of the target's required PEB, thereby linking user density to achievable sensing performance.
\end{itemize}

The rest of this paper is organized as follows. The system model of the proposed hybrid mono- and bi-static sensing is presented in Section~\ref{sec:model}.  Section~\ref{sec:analysis} derives the CRLB for target position and velocity estimation in the hybrid sensing mode. The analysis is verified with practical estimation algorithm in Section~\ref{sec:verify}. In Section~\ref{sec:utilization}, the analytical results are used to derive the sensing coverage for given cooperating UE position and inform the cooperating sensing UE selection for given target. Finally, this paper is concluded in Section~\ref{sec:conclusion}.

\emph{Notation}: Boldfaced lower- and upper- case characters denote  vectors and matrices, respectively. For matrix $\mathbf{A}$, we use $\mathbf{A}^T$, $\mathbf{A}^H$, $\mathrm{Tr}(\mathbf{A})$ and $[\mathbf{A}]_{i,j}$ to denote the transpose, Hermitian transpose, trace and the $(i,j)$th element of matrix $\mathbf{A}$. $\mathbf{0}_{i\times j}$ is an all-zero matrix of size $i\times j$. $\succeq$ is the positive semi-definite inequality of matrix. $\mathbb{C}^{M\times N}$ and $\mathbb{R}^{M\times N}$ signifies the spaces of $M\times N$ complex and real matrices. $\mathrm{Rect}(\frac{t}{T})$ is the rectangular window function centered at $t=0$, with window length $T$. $\mathbb{E}\left[\cdot\right]$ and $\mathrm{var}(\cdot)$ denote the expectation and variance of a random variable.

\section{System Model}\label{sec:model}
As shown in Fig.~\ref{F:model}, we consider a hybrid mono- and bi-static ISAC system enabled by BS-UE cooperation, where the BS transmits the ISAC signal and senses targets by receiving the echoes. This corresponds to the BS mono-static mode. Meanwhile, the UE decodes the communication data from the received ISAC signal, then removes the decoded data to extract sensing information about the targets, operating in the BS-UE bi-static mode. In this hybrid ISAC framework, the UE feeds back its local sensing results to the BS, enabling enhanced sensing accuracy through cooperative fusion.

\begin{figure}[htb]
\centering
\includegraphics[width=0.4\textwidth]{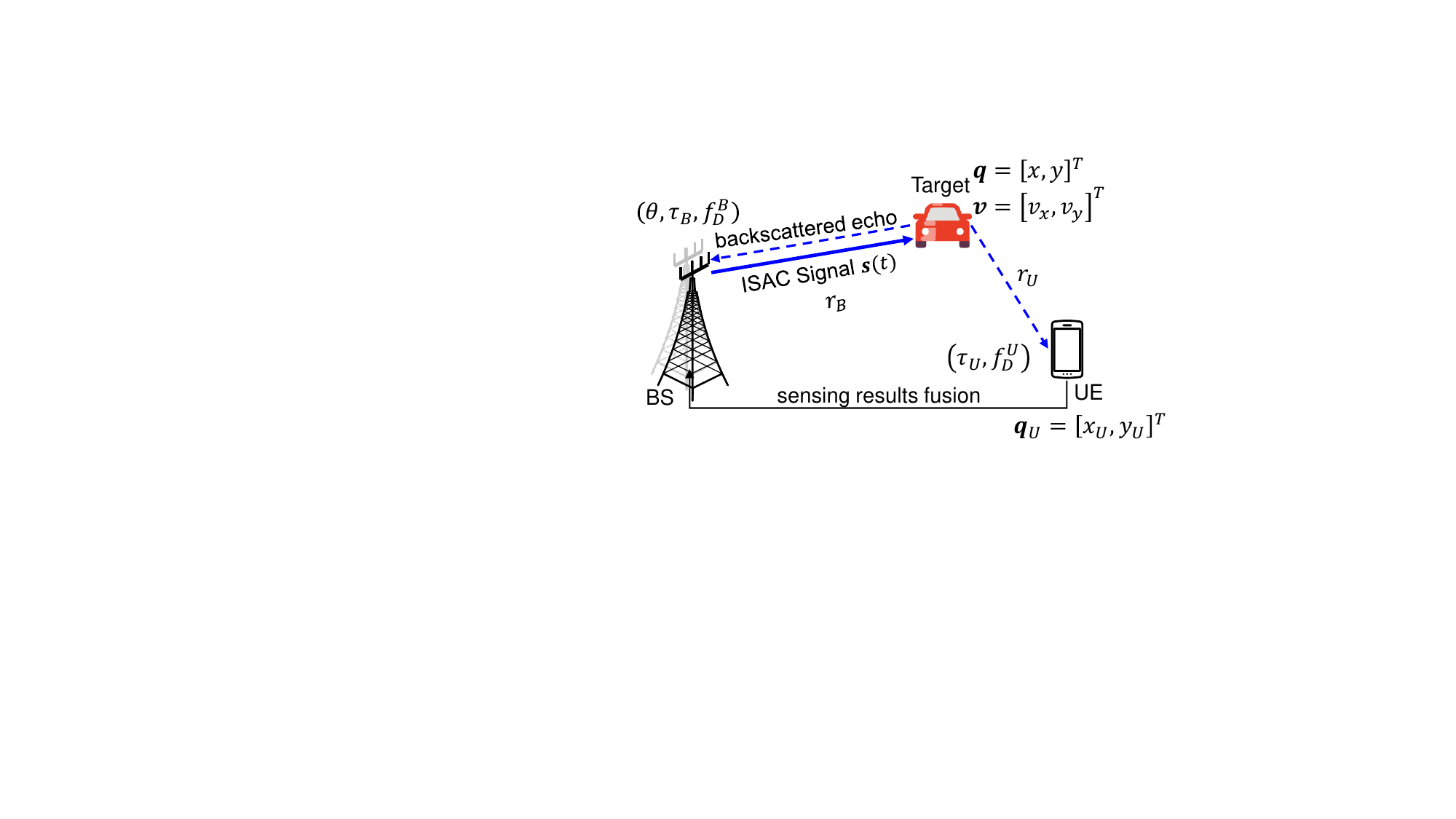}
\caption{The hybrid mono- and bi-static ISAC mode with BS-UE cooperation.}
\label{F:model}
\end{figure}
Without loss of generality, we assume that the BS is located at the origin and operates in full-duplex mode. The transmit antenna consists of an $N_T$-element uniform linear array (ULA), and the receive antenna is an $N_R$-element ULA. Both ULAs are aligned along the $y$-axis, with their array centers coinciding with the origin.  The system operates at carrier frequency $f_c$, and the inter-element spacing is set to half the wavelength,  i.e., $d=\lambda_c/2$, where $\lambda_c=c/f_c$ and $c$ is the speed of light.  In the BS-centered coordinate system, the UE is located at $\mathbf{q}_U=[x_U, y_U]^T$, and is equipped with a single antenna. Denote the target position and velocity by $\mathbf{q}=[x,y]^T$ and $\mathbf{v}=[v_x,v_y]^T$, respectively, which need to be estimated based on the backscattered echo received by the BS and the scattered signal received by the UE.

\subsection{Signal Model}
The ISAC signal sent by the BS is modulated using OFDM, with $K$ subcarriers and $M$ OFDM symbols \cite{DaiQiang}. Denote by $d_{km}$  the communication symbol for the UE carried by the $k$th subcarrier of the $m$th symbol and assume $\mathbb{E}\{|d_{km}|^2\}=1$, where $k=0,...,K-1$ and $m=0,...,M-1$. We consider the basic case where the  symbol is transmitted via all antennas with beamforming vector $\mathbf{w}\in\mathbb{C}^{N_T\times 1}$, subject to $\|\mathbf w\|=1$. With the total transmit power denote by $P_T$, the power per subcarrier is $P_{\mathrm{avg}}=\frac{P_T}{K}$. The signal samples at the $N_T$ transmit antennas can be written as $\mathbf{x}_{km}=\sqrt{P_{\mathrm{avg}}}\mathbf{w}d_{km}$. Hence, the baseband transmit signal can be written as
\begin{align}
\mathbf{s}(t)=\sum_{m=0}^{M-1}\sum_{k=0}^{K-1}\mathbf{x}_{km}e^{j2\pi k\Delta f(t-mT_s)}\mathrm{rect}\left(\frac{t-mT_s}{T_s}\right),
\end{align}
where $\Delta f$ is the OFDM subcarrier spacing, $T=1/\Delta f$ is the OFDM symbol duration, and $T_s=T+T_{cp}$ is the OFDM symbol duration including the cyclic prefix (CP).

Assume that the interference from background clutter has been successfully removed. Then, the backscattered channel observed by the BS only contains the reflection from the target, with the complex channel gain, delay and Doppler denoted by $\alpha_B$, $\tau_B$ and $f_{D}^B$, respectively. The received signal  after OFDM demodulation can be expressed as \cite{DaiQiang}
\begin{align}
\mathbf{y}_{km}^B=\mathbf{H}_{km}^B\mathbf{x}_{km}+\mathbf{z}_{km}=\sqrt{P_{\mathrm{avg}}}\mathbf{H}_{km}^B\mathbf{w}d_{km}+\mathbf{z}_{km}, \label{eq:RxBS}
\end{align}
where $\mathbf{H}_{km}^B\in\mathbb{C}^{N_R\times N_T}$ is the effective channel matrix given by
\begin{align}
\mathbf{H}_{km}^B=\alpha_B\mathbf{b}(\theta)\mathbf{a}^{H}(\theta)e^{-j2\pi k\Delta f\tau_B}e^{j2\pi mT_sf_{D}^{B}}, \label{eq:HB}
\end{align}
with $\mathbf{a}(\theta)\in\mathbb{C}^{N_T\times 1}$ and $\mathbf{b}(\theta)\in\mathbb{C}^{N_R\times 1}$ being the transmit and receive steering vectors of the ULA, and $\theta$ is the angle of the target with respect to the antenna array of the BS, as shown in Fig.~\ref{F:model}.  $\mathbf{z}_{km}\sim\mathcal{CN}(\mathbf{0},\sigma_N^2\mathbf{I}_{N_{R}})$ is the additive white Gaussian noise (AWGN).

Similarly, the received signal at the UE via the target scattering path can be written as
\begin{align}
y_{km}^U&=\mathbf{h}_{km}^H\mathbf{x}_{km}+v_{km}=\sqrt{P_{\mathrm{avg}}}\mathbf{h}_{km}^H\mathbf{w}d_{km}+z_{km},
\end{align}
where $\mathbf{h}_{km}=\alpha_U\mathbf{a}(\theta)e^{-j2\pi k\Delta f\tau_U}e^{j2\pi f_{D}^UmT_s}$ is the  channel observed by UE and $z_{km}\sim\mathcal{CN}(0,\sigma_N^2)$ is the AWGN noise.

\subsection{SNR Analysis}
Based on \eqref{eq:RxBS} and \eqref{eq:HB}, the SNR of the echo signal received by each antenna element of the BS, at a single resource element can be calculated as
\begin{align}
\tilde\Upsilon_B=P_{\mathrm{avg}}\frac{|\alpha_B|^2|\mathbf{a}^H(\theta)\mathbf{w}|^2}{\sigma_N^2}. \label{eq:SNRmAnt}
\end{align}
Considering the dual-hop path loss, we have
\begin{align}
|\alpha_B|^2=\frac{c^2\kappa_B}{(4\pi)^3f_c^2r_B^4},
\end{align}
where $\kappa_B$ is the radar cross-section (RCS) observed by the BS, $r_B$ is the distance between the target and BS, and $f_c$ is the carrier frequency. For notational convenience, denote the beamforming gain at the direction of the target by $\beta\triangleq |\mathbf{a}^H(\theta)\mathbf{w}|^2$, where $\beta\leq N_T$ and the equality holds if the beam happens to be in the direction of the target, i.e., $\mathbf{w}=\frac{1}{\sqrt{N}_T}\mathbf{a}(\theta)$.  Hence, the SNR in \eqref{eq:SNRmAnt} can be written as
\begin{align}
\tilde{\Upsilon}_B=\frac{\beta P_{\mathrm{avg}}c^2\kappa_B}{(4\pi)^3f_c^2r_B^4\sigma_N^2}. \label{eq:SNRm}
\end{align}

Similarly, we can obtain the SNR at the UE as
\begin{align}
\tilde\Upsilon_U=\frac{\beta P_{\mathrm{avg}}c^2\kappa_U}{(4\pi)^3f_c^2r_B^2r_U^2\sigma_N^2}, \label{eq:SNRu}
\end{align}
where $\kappa_U$ is the RCS observed by UE, and $r_U$ is the distance between the target and the UE.

For the purpose of sensing, the random data symbols $\{d_{km}\}$ must be removed from the received signal $\mathbf{y}_{km}^B$ and $y_{km}^U$, in order to extract the pure channel response induced by the scatterers. This is typically accomplished through either matched filtering or data division. When a constant-envelope modulation scheme is used to generate $d_{km}$, e.g., phase shift keying (PSK), no SNR penalty is incurred. Otherwise, an SNR penalty arises after data division,  denoted by $\eta$, where $\eta\geq 1$\cite{LiuFan2022}. Therefore, the SNR of the data-free sensing signal at the BS and UE are given by
\begin{align}
\Upsilon_B=\tilde\Upsilon_B/\eta, \textnormal{ and } \Upsilon_U=\tilde\Upsilon_U/\eta. \label{eq:SNR}
\end{align}

\subsection{Target Parameter Estimation}
The log likelihood function for receiving $\mathbf{y}_{km}^B$ given the target-related parameter set $\Theta_B=\{\alpha_B, \tau_B,\theta,f_{D}^B\}$ can be written as \eqref{eq:likelihoodBS}, shown on top of the next page. Similarly, the log likelihood function for receiving $y_{km}^U$  at the UE is a function of $\Theta_U=\{\alpha_U,\tau_U,f_{D}^U\}$ given in \eqref{eq:likelihoodUE}.
\begin{figure*}
\begin{align}
p_1(\mathbf{y}_{km}^B;\Theta_B)&=\prod_{k=0}^{K-1}\prod_{m=0}^{M-1}\frac{1}{(\pi\sigma_N^2)^{N_R}}\exp\left(-\frac{1}{\sigma_N^2}\left\|\mathbf{y}_{km}^B-\mathbf{b}(\theta)d_{km}\alpha_B \beta e^{-j2\pi k\Delta f\tau_B}e^{j2\pi mT_sf_{D}^{B}}\right\|^2\right),\label{eq:likelihoodBS}\\
p_2({y}_{km}^U;\Theta_U)&=\prod_{k=0}^{K-1}\prod_{m=0}^{M-1}\frac{1}{\pi\sigma_N^2}\exp\left(-\frac{1}{\sigma_N^2}\left|y_{km}^U-d_{km}\alpha_U\beta e^{-j2\pi k\Delta f\tau_U}e^{j2\pi mT_sf_{D}^{U}}\right|^2\right),\label{eq:likelihoodUE}
\end{align}
\end{figure*}
The Fisher information matrix for estimation at the BS and UE, denoted by $\mathbf{I}(\Theta_B)\in\mathbb{R}^{3\times 3}$ and $\mathbf{I}(\Theta_U)\in\mathbb{R}^{2\times2}$,  can be calculated from \eqref{eq:likelihoodBS} and \eqref{eq:likelihoodUE} as
\begin{align}
\left[\mathbf{I}(\Theta_B)\right]_{ij}=-\mathbb{E}\left[\frac{\partial^2\ln p_1(\mathbf{y}_{km}^B;\Theta_B)}{\partial \Theta_B(i)\partial \Theta_B(j)}\right], \label{eq:FisherB}\\
\left[\mathbf{I}(\Theta_U)\right]_{ij}=-\mathbb{E}\left[\frac{\partial^2\ln p_2(y_{km}^U;\Theta_U)}{\partial \Theta_U(i)\partial \Theta_U(j)}\right], \label{eq:FisherU}
\end{align}
where $\Theta_B(i)$ is the $i$th element in the set $\Theta_B$, with $i\in\{1,2,3,4\}$, and $\Theta_U(i)$ is the $i$th element in the set $\Theta_U$, with $i\in\{1,2,3\}$. The details for deriving each element of $\mathbf{I}(\Theta_B)$ and $\mathbf{I}(\Theta_U)$ can be found from the appendix of \cite{pucci2025}, which are omitted here for brevity.

Since $\Theta_B$ and $\Theta_U$ are independent parameters estimated from the independent observations, the total Fisher information matrix for the whole parameter set $\Theta \triangleq \Theta_B\bigcup \Theta_U$ is denoted by $\mathbf{I}(\Theta)\in\mathbb{R}^{7\times 7}$, written as
\begin{align}
\mathbf{I}(\Theta)=\left[
\begin{matrix} \mathbf{I}(\Theta_B) & \mathbf{0}_{4\times 3} \\ \mathbf{0}_{3\times 4} & \mathbf{I}(\Theta_U)
\end{matrix}\right]. \label{eq:FisherAll}
\end{align}

The ultimate goal is to estimate the target position $\mathbf{q}$ and velocity $\mathbf{v}$, denoted as  $ \Phi\triangleq \{\mathbf q, \mathbf{v}\}$. Note that the parameter set $\Theta$ can be written as multi-dimensional functions as $ \Phi$, i.e., $\Theta=\mathbf{g}( \Phi)$. According to the rule of parameter transformation \cite{kay1993fundamentals}, the Fisher information matrix for $ \Phi$ can be written as
\begin{align}
\mathbf{I}(\Phi)=\mathbf{J}^T\mathbf{I}(\Theta)\mathbf{J},
\end{align}
where $\mathbf{J}$ is the Jacobian matrix of transformation, with $[\mathbf{J}]_{ij}=\frac{\partial \Theta(i)}{\partial{\Phi}(j)}$. The estimation error matrix of ${\Phi}$ with any unbiased estimator is bounded as
\begin{align}
\mathrm{Cov}({\hat{\Phi}})\triangleq\mathbb{E}\left[(\hat{\Phi}-\Phi)(\hat{\Phi}-\Phi)^T\right]\succeq \left(\mathbf{J}^T\mathbf{I}(\Theta)\mathbf{J}\right)^{-1}. \label{eq:jointEst}
\end{align}

 In OFDM-ISAC, joint estimation of target position $\mathbf{q}$  and velocity $\mathbf{v}$, with performance bounded by \eqref{eq:jointEst},  can be achieved through the joint estimation of delay and Doppler \cite{ISACMLEsti}. However, this approach entails significant computational complexity \cite{DaiQiang}. In this paper, we consider the decoupled estimation of $\mathbf{q}$ and $\mathbf{v}$ for simplicity.


\section{CRLB for Hybrid Mono- and Bi-static Sensing}\label{sec:analysis}
The section derives the CRLB for estimating the target position and velocity by considering the related parameters in $\Theta$, while treating others as unknown variables. In particular, target position $\mathbf{q}$ is estimated from the obtained delay and AoA, i.e., $\{\tau_B, \theta, \tau_U\}$, and the target velocity $\mathbf{v}$ is estimated from the obtained Doppler $\{f_D^B, f_D^U\}$.


\subsection{CRLB for Target Localization}

The target position $\mathbf q$ can be estimated from $\Theta_1=\{\tau_B, \theta, \tau_U\}$ based on the transformation function $\mathbf{g}_1(\mathbf q)$, where
\begin{align}
\left[\begin{matrix}\tau_B \\ \theta \\ \tau_U\end{matrix}\right]=\mathbf{g}_1(\mathbf{q})=
\left[\begin{matrix}\frac{2}{c}\|\mathbf{q}\| \\ \arctan\left(\frac{\mathbf{q}(2)}{\mathbf{q}(1)}\right) \\ \frac{1}{c}(\|\mathbf{q}\|+\|\mathbf{q}-\mathbf{q}_U\|)\end{matrix}\right].\label{eq:g1}
\end{align}

Since $\Theta_1$ is a subset of $\Theta$, all the necessary information to derive the estimation variance bound for the parameter $\Theta_1$ is retained by the expected Fisher Information Matrix (EFIM) $\mathbf{I}(\Theta_1)$, which can be calculated from $\mathbf{I}(\Theta)$ according to the definition of EFIM in \cite{5571889}. Another way to construct the EFIM is to divide $\Theta_1$ into two parts, i.e, those from BS $\Theta_{1,B}=\{\tau_B, \theta\}$ and the other parameter from UE, i.e., $\tau_U$. The EFIM for $\Theta_{1,B}$ has been derived in \cite{pucci2025}, which is given by
\begin{align}
\mathbf{I}(\Theta_{1,B})=\left[\begin{matrix}\mathcal{I}_{\tau_B} & \mathcal{I}_{\tau_B,\theta} \\
\mathcal{I}_{\theta,\tau_B} & \mathcal{I}_{\theta}
\end{matrix}\right]\label{eq:Itheta1B}
\end{align}
where
\begin{align}
&\mathcal{I}_{\tau_B}=\frac{2\pi^2\Delta f^2MK(K^2-1)N_R \Upsilon_B}{3},\label{eq:FishTauB}\\
&\mathcal{I}_{\theta}=\frac{\pi^2 KM(N_R^2-1)N_R\Upsilon_B\cos^2(\theta)}{6},\label{eq:FishTheta0}\\
&\mathcal{I}_{\tau_B,\theta}=\mathcal{I}_{\theta,\tau_B}=0.
\end{align}

The Fisher information for $\tau_U$ is well-established for the bi-static sensing \cite{pucci2025}, i.e.,
\begin{align}
&\mathcal{I}_{\tau_U}=\frac{2\pi^2\Delta f^2MK(K^2-1) \Upsilon_U}{3},\label{eq:FishTauU0}
\end{align}

Consider the independence between the BS and UE estimation, the EMIF for the parameter set $\Theta_1$ is
\begin{align}
&\mathbf{I}(\Theta_1)
=\left[\begin{matrix}\mathcal{I}_{\tau_B} & 0 & 0\\
0 & \mathcal{I}_{\theta}  & 0 \\
0 & 0 & \mathcal{I}_{\tau_U}
\end{matrix}\right]\approx
\frac{2\pi^2MK}{3}\nonumber\\
&\cdot\left[\begin{matrix}W^2N_R\Upsilon_B & 0 & 0\\
0 & \frac{(N_R^2-1)\cos^2\theta}{4}N_R\Upsilon_B  & 0 \\
0 & 0 & W^2\Upsilon_U
\end{matrix}\right],
\label{eq:Itheta1}
\end{align}
where $W=\Delta fK$ is the system bandwidth. The approximation is made by considering a sufficiently large number of subcarriers, i.e., $K\gg 1$.



Next, we consider the Jacobian matrix of transformation, calculated from \eqref{eq:g1}. For convenience, we use the $xy$-coordinates to represent the target and UE position, i.e., $\mathbf{q}=[x,y]^T$ and $\mathbf{q}_U=[x_U,y_U]^T$, which renders
\begin{align}
\mathbf{J}_{\mathbf{q}}&=\left[\begin{matrix}\frac{\partial\tau_B}{\partial x} & \frac{\partial\tau_B}{\partial y} \\
\frac{\partial\theta}{\partial x} & \frac{\partial\theta}{\partial y} \\ \frac{\partial\tau_U}{\partial x} & \frac{\partial\tau_U}{\partial y}\end{matrix}\right]
=\left[\begin{matrix}\frac{2x}{cr_B} & \frac{2y}{cr_B} \\
\frac{-y}{x^2+y^2} & \frac{x}{x^2+y^2} \\ \frac{x}{cr_B}+\frac{x-x_U}{cr_U} & \frac{y}{cr_B}+\frac{y-y_U}{cr_U}\end{matrix}\right].\label{eq:Jq}
\end{align}

Note that the coordinate of UE appears in the Jacobian matrix and hence the CRLB of target localization is a function of both the target position $\mathbf{q}$ and UE position $\mathbf{q}_U$. We denote the CRLB for target localization with hybrid sensing by $C_{h}(\mathbf{q},\mathbf{q}_U)$, and we have
\begin{align}
\mathrm{var}(\|\mathbf{q}-\mathbf{\hat q}\|^2)\geq
C_{h}(\mathbf{q},\mathbf{q}_U)=\mathrm{Tr}\left((\mathbf{J}_{\mathbf{q}}^T\mathbf{I}(\Theta_1)\mathbf{J}_{\mathbf{q}})^{-1}\right)
\end{align}
where $\mathbf{\hat q}$ is the unbiased estimation of target position $\mathbf{q}$ with the hybrid mono- and bi-static sensing. 

\begin{theorem}
The CRLB for target localization in OFDM-ISAC with BS-UE cooperation can be expressed in closed-form in terms of the target location $\mathbf q$ and UE location $\mathbf q_U$, given by
\begin{align}
&C_{h}(\mathbf{q},\mathbf{q}_U)
=\nonumber\\
&
\frac{4c^2r_B^2\mathcal{I}_{\tau_B}+c^4\mathcal{I}_{\theta}+2c^2r_B^2\left(1+\cos\psi\right)\mathcal{I}_{\tau_U}}
{4c^2\mathcal{I}_{\tau_B}\mathcal{I}_{\theta}+4r_B^2\mathcal{I}_{\tau_B}\mathcal{I}_{\tau_U}\sin^2\psi+c^2\mathcal{I}_{\theta}\mathcal{I}_{\tau_U}\left(1+\cos\psi\right)^2},
\label{eq:CRLBjoint_polar}
\end{align}
where $\mathcal{I}_{\tau_B}$, $\mathcal{I}_{\theta}$, and $\mathcal{I}_{\tau_U}$ are Fisher information for estimating individual parameters, given in \eqref{eq:FishTauB}, \eqref{eq:FishTheta0} and \eqref{eq:FishTauU0}, respectively. $r_B=\|\mathbf{q}\|$ is the distance between the target and BS, and $\psi=\arccos\left(\frac{\mathbf{q}^T(\mathbf{q}_U-\mathbf q)}{\|\mathbf{q}\|\|\mathbf{q}_U\|}\right)$ is the angle formed by BS-target-UE, as shown in Fig.~\ref{F:geometry}.
\end{theorem}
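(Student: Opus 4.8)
The plan is to reduce the $2\times 2$ matrix inversion to a ratio of two scalars. Writing $\mathbf{M}\triangleq\mathbf{J}_{\mathbf q}^T\mathbf{I}(\Theta_1)\mathbf{J}_{\mathbf q}\in\mathbb{R}^{2\times2}$, the elementary identity $\mathrm{Tr}(\mathbf{M}^{-1})=\mathrm{Tr}(\mathbf{M})/\det(\mathbf{M})$, valid for any invertible $2\times2$ matrix, gives $C_{h}(\mathbf q,\mathbf q_U)=\mathrm{Tr}(\mathbf{M})/\det(\mathbf{M})$. The entire derivation then reduces to evaluating the two scalars $\mathrm{Tr}(\mathbf{M})$ and $\det(\mathbf{M})$ and re-expressing them through $r_B$ and the angle $\psi$, keeping $\mathcal{I}_{\tau_B},\mathcal{I}_{\theta},\mathcal{I}_{\tau_U}$ symbolic.

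Because $\mathbf{I}(\Theta_1)$ in \eqref{eq:Itheta1} is diagonal, I would decompose $\mathbf{M}=\mathcal{I}_{\tau_B}\mathbf{j}_1^T\mathbf{j}_1+\mathcal{I}_{\theta}\mathbf{j}_2^T\mathbf{j}_2+\mathcal{I}_{\tau_U}\mathbf{j}_3^T\mathbf{j}_3$ as a weighted sum of three rank-one terms, where $\mathbf{j}_1,\mathbf{j}_2,\mathbf{j}_3$ are the rows of $\mathbf{J}_{\mathbf q}$ in \eqref{eq:Jq}. The key is to read these rows geometrically: with the unit vectors $\hat{\mathbf u}_B\triangleq\mathbf{q}/r_B$ (BS-to-target) and $\hat{\mathbf u}_U\triangleq(\mathbf{q}-\mathbf{q}_U)/r_U$ (UE-to-target), and $\hat{\mathbf u}_B^{\perp}$ the $90^\circ$ rotation of $\hat{\mathbf u}_B$, one has $\mathbf{j}_1=(2/c)\hat{\mathbf u}_B^T$, $\mathbf{j}_2=(1/r_B)\hat{\mathbf u}_B^{\perp T}$, and $\mathbf{j}_3=(1/c)(\hat{\mathbf u}_B+\hat{\mathbf u}_U)^T$. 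Since flipping both unit vectors preserves angles, $\psi$ is exactly the angle between $\hat{\mathbf u}_B$ and $\hat{\mathbf u}_U$, so $\hat{\mathbf u}_B\cdot\hat{\mathbf u}_U=\cos\psi$. Using $\mathrm{Tr}(\mathbf{M})=\sum_i\mathcal{I}_i\|\mathbf{j}_i\|^2$ together with $\|\hat{\mathbf u}_B\|=\|\hat{\mathbf u}_U\|=1$ then yields $\|\mathbf{j}_1\|^2=4/c^2$, $\|\mathbf{j}_2\|^2=1/r_B^2$, and $\|\mathbf{j}_3\|^2=2(1+\cos\psi)/c^2$; multiplying through by $c^4r_B^2$ reproduces the numerator of \eqref{eq:CRLBjoint_polar}.

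For the determinant I would apply the Cauchy--Binet expansion for a sum of $2\times2$ rank-one matrices, $\det\big(\sum_i\mathcal{I}_i\mathbf{a}_i\mathbf{a}_i^T\big)=\sum_{i<j}\mathcal{I}_i\mathcal{I}_j(\mathbf{a}_i\times\mathbf{a}_j)^2$, with $\mathbf{a}_i\triangleq\mathbf{j}_i^T$ and $\mathbf{a}\times\mathbf{b}\triangleq a_1b_2-a_2b_1$. The three scalar cross products follow from the same unit-vector identities: $\hat{\mathbf u}_B^{\perp}\times\hat{\mathbf u}_B=-1$ gives $(\mathbf{j}_1\times\mathbf{j}_2)^2=4/(c^2r_B^2)$; $|\hat{\mathbf u}_B\times\hat{\mathbf u}_U|=\sin\psi$ gives $(\mathbf{j}_1\times\mathbf{j}_3)^2=4\sin^2\psi/c^4$; and the planar identity $\hat{\mathbf u}_B^{\perp}\times\hat{\mathbf u}_U=-\cos\psi$ (rotating $\hat{\mathbf u}_B$ by $90^\circ$ turns its cross product with $\hat{\mathbf u}_U$ into minus their inner product) gives $(\mathbf{j}_2\times\mathbf{j}_3)^2=(1+\cos\psi)^2/(c^2r_B^2)$. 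Multiplying by $c^4r_B^2$ matches the denominator of \eqref{eq:CRLBjoint_polar} term by term, and the common factor $c^4r_B^2$ cancels in the ratio, closing the argument.

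I expect the main obstacle to be the determinant step: first recognizing the rank-one structure so that Cauchy--Binet applies, and—more delicately—verifying that the three seemingly messy $2\times2$ cross products collapse to the clean forms $1$, $\sin^2\psi$, and $(1+\cos\psi)^2$. This hinges on correctly identifying $\psi$ with the angle between $\hat{\mathbf u}_B$ and $\hat{\mathbf u}_U$ and on the planar relation between $\hat{\mathbf u}_B^{\perp}\times\hat{\mathbf u}_U$ and $\cos\psi$. A brute-force coordinate expansion of \eqref{eq:Jq} using only $x^2+y^2=r_B^2$ and $\mathbf{q}^T(\mathbf{q}-\mathbf{q}_U)=r_Br_U\cos\psi$ would also reach the same result, but it obscures why the final expression is so compact; the unit-vector reformulation is what makes the geometry transparent.
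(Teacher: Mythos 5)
Your proposal is correct, and it shares the paper's opening move but diverges in how the real work is done. The paper's Appendix~\ref{A:positionCRLB} also reduces everything to $\mathrm{Tr}(\mathbf{M}^{-1})=(A+C)/(AC-B^2)$ for the symmetric $2\times 2$ matrix $\mathbf{M}=\mathbf{J}_{\mathbf{q}}^T\mathbf{I}(\Theta_1)\mathbf{J}_{\mathbf{q}}$, but then evaluates $A$, $B$, $C$ by brute-force expansion of the Cartesian Jacobian entries and finishes by substituting the polar parametrization $x=r_B\cos\theta$, $y=r_B\sin\theta$, $x_U=x+r_U\cos\phi$, $y_U=y+r_U\sin\phi$ with $\psi=\pi-(\theta-\phi)$, omitting the trigonometric simplification that collapses the result into \eqref{eq:CRLBjoint_polar}. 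You instead exploit the rank-one structure $\mathbf{M}=\sum_i\mathcal{I}_i\,\mathbf{j}_i^T\mathbf{j}_i$: the trace becomes a weighted sum of $\|\mathbf{j}_i\|^2$, the determinant follows from Cauchy--Binet as $\sum_{i<j}\mathcal{I}_i\mathcal{I}_j(\mathbf{j}_i\times\mathbf{j}_j)^2$, and the three norms and three cross products reduce, via the unit vectors $\hat{\mathbf{u}}_B$, $\hat{\mathbf{u}}_B^\perp$, $\hat{\mathbf{u}}_U$, exactly to the quantities $4/c^2$, $1/r_B^2$, $2(1+\cos\psi)/c^2$ and $4/(c^2r_B^2)$, $4\sin^2\psi/c^4$, $(1+\cos\psi)^2/(c^2r_B^2)$ appearing in \eqref{eq:CRLBjoint_polar} (I verified each identity; they are all correct). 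What your route buys is a coordinate-free derivation that explains \emph{why} the closed form is compact --- each term in the denominator is a squared cross product, which also makes the degeneracy at $\psi\in\{0,\pi\}$ and the collinear rank deficiency transparent --- whereas the paper's route is elementary but leaves the decisive algebra implicit. One further point in your favor: your identification $\cos\psi=\hat{\mathbf{u}}_B\cdot\hat{\mathbf{u}}_U=\mathbf{q}^T(\mathbf{q}-\mathbf{q}_U)/(r_Br_U)$ is the one consistent with the appendix definition $\psi=\pi-(\theta-\phi)$ and with Fig.~\ref{F:geometry}; the $\arccos$ expression printed in the theorem statement (with $\mathbf{q}_U-\mathbf{q}$ in the numerator and $\|\mathbf{q}_U\|$ in the denominator) contains sign/normalization typos, and the formula \eqref{eq:CRLBjoint_polar} is only correct under your reading.
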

\begin{proof}
Please refer to Appendix~\ref{A:positionCRLB}.
\end{proof}


\begin{figure}[htb]
  \centering
  \includegraphics[width=0.3\textwidth]{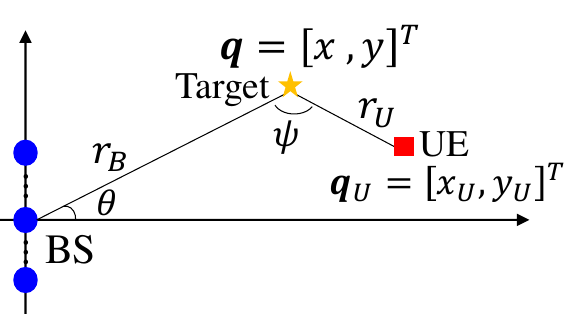}
  \caption{An illustration of BS-target-UE geometry}
  \label{F:geometry}
\end{figure}


Under the special case when the measurements at the UE are not used, the hybrid mono- and bi-static sensing reduces to the BS mono-static sensing. Mathematically, the CRLB of target localization under mono-static sensing can be obtained, say when $\Gamma_U=0$ in \eqref{eq:FisherTauU}, corresponding to $\mathcal{I}_{\tau_U}=0$ in \eqref{eq:CRLBjoint_polar}. Furthermore, considering that $\mathcal{I}_{\tau_B}$ is related with target position by the SNR $\Upsilon_B$ and ${\mathcal{I}_{\theta}}$ is related with target position via $\Upsilon_B$ and AoA $\theta$, we extract the mono-static sensing CRLB as a function of target position in Corollary~\ref{cor:mono}.

\begin{corollary}\label{cor:mono}
The CRLB of target localization with BS mono-static  OFDM-ISAC can be written in closed-form in terms of the target position $\mathbf{q}=[x,y]^T=[r_B\cos\theta,r_B\sin\theta]^T$ as
\begin{align}
C_{\mathrm{mono}}(\mathbf{q})=\frac{r_B^2}{{\mathcal{I}_{\theta}}}+\frac{c^2}{4\mathcal{I}_{\tau_B}}
=c_1\frac{r_B^6}{\cos^2\theta}+c_2r_B^4.\label{eq:CRLBm}
\end{align}
where
\begin{align}
c_1&=\frac{6}{\pi^2 N_RKM(N_R^2-1)\bar{\Upsilon}_B}, \label{eq:C1}\\
c_2&=\frac{3c^2}{8\pi^2 N_RKM\Delta f^2(K^2-1)\bar\Upsilon_B}, \label{eq:C2}
\end{align}
are constants independent of target positions, with $\bar{\Upsilon}_B=\frac{\beta P_{\mathrm{avg}}c^2\kappa_B}{(4\pi)^3f_c^2\sigma_N^2\eta}$ being the coefficient of the SNR by excluding the distance factor.
\end{corollary}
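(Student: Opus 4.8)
The plan is to obtain $C_{\mathrm{mono}}$ as the degenerate case of the hybrid bound of the theorem in which the UE delay carries no information, i.e. $\mathcal{I}_{\tau_U}\to 0$, and then to expose the distance dependence that is hidden inside the Fisher information. First I would set $\mathcal{I}_{\tau_U}=0$ in \eqref{eq:CRLBjoint_polar}. Every term involving the bi-static angle $\psi$ (the $(1+\cos\psi)$ term in the numerator and the $\sin^2\psi$ and $(1+\cos\psi)^2$ terms in the denominator) carries a factor $\mathcal{I}_{\tau_U}$ and therefore vanishes, leaving the compact ratio
\begin{align}
C_{\mathrm{mono}}(\mathbf q)=\frac{4c^2r_B^2\mathcal{I}_{\tau_B}+c^4\mathcal{I}_\theta}{4c^2\mathcal{I}_{\tau_B}\mathcal{I}_\theta}=\frac{r_B^2}{\mathcal{I}_\theta}+\frac{c^2}{4\mathcal{I}_{\tau_B}},
\end{align}
where the second equality is nothing but termwise division by $4c^2\mathcal{I}_{\tau_B}\mathcal{I}_\theta$. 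This already establishes the first equality in \eqref{eq:CRLBm}.

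The remaining work is to insert the explicit Fisher-information expressions and make the SNR distance dependence explicit. I would substitute $\mathcal{I}_\theta$ from \eqref{eq:FishTheta0} and $\mathcal{I}_{\tau_B}$ from \eqref{eq:FishTauB}, together with the key fact from \eqref{eq:SNRm} and \eqref{eq:SNR} that the BS echo SNR decays as $\Upsilon_B=\bar\Upsilon_B/r_B^4$, with $\bar\Upsilon_B$ the distance-free coefficient defined in the statement. For the angular term, the $\cos^2\theta$ inside $\mathcal{I}_\theta$ combines with the $r_B^2$ numerator and the $r_B^4$ from $\Upsilon_B^{-1}$ to give $r_B^2/\mathcal{I}_\theta=c_1\,r_B^6/\cos^2\theta$ with $c_1$ as in \eqref{eq:C1}; for the delay term, the $r_B^4$ from $\Upsilon_B^{-1}$ gives $c^2/(4\mathcal{I}_{\tau_B})=c_2\,r_B^4$ with $c_2$ as in \eqref{eq:C2}. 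Summing the two pieces yields the second equality and simultaneously identifies the constants, completing the proof.

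The calculation is entirely routine; the only point requiring care is the bookkeeping of the $r_B^4$ path-loss factor inside $\Upsilon_B$, since it is precisely this factor that upgrades the nominal $r_B^2$ and $r_B^0$ scalings of the two terms to the $r_B^6$ and $r_B^4$ growth displayed in \eqref{eq:CRLBm}. As an independent sanity check that bypasses the theorem altogether, I would note that $\{\tau_B,\theta\}$ parametrize the target in polar coordinates through $\tau_B=2r_B/c$ and the bearing $\theta$: the gradient of $\tau_B$ points radially with magnitude $2/c$, while that of $\theta$ points tangentially with magnitude $1/r_B$, and these two directions are orthonormal. Hence the diagonal EFIM $\mathbf I(\Theta_{1,B})$ of \eqref{eq:Itheta1B} stays diagonal in the polar frame, with eigenvalues $4\mathcal{I}_{\tau_B}/c^2$ and $\mathcal{I}_\theta/r_B^2$; inverting and taking the trace reproduces the radial variance $c^2/(4\mathcal{I}_{\tau_B})$ plus the cross-range variance $r_B^2/\mathcal{I}_\theta$, i.e. the first equality, without any explicit matrix inversion.
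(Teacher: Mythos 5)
Your proposal is correct and follows essentially the same route as the paper, which likewise obtains the corollary by setting $\mathcal{I}_{\tau_U}=0$ in the hybrid CRLB \eqref{eq:CRLBjoint_polar} and then substituting $\mathcal{I}_{\tau_B}$, $\mathcal{I}_{\theta}$ from \eqref{eq:FishTauB}--\eqref{eq:FishTheta0} together with the path-loss factoring $\Upsilon_B=\bar{\Upsilon}_B/r_B^4$ to extract $c_1$ and $c_2$. Your polar-frame sanity check (orthogonal radial/tangential gradients of $\tau_B$ and $\theta$ keeping the EFIM diagonal) is a correct and welcome independent verification, but it is supplementary rather than a different proof.
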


Under the special case when the BS has only a single receive antenna, i.e., $N_R=1$, the AoA information cannot be estimated, and hence we have $\mathcal{I}_{\theta}=0$ in \eqref{eq:FishTheta0}. BS mono-static sensing cannot locate the target, as $C_{\mathrm{mono}}(\mathbf{q})\rightarrow \infty$ in \eqref{eq:CRLBm} as $\mathcal{I}_{\theta}\rightarrow 0$. In contrast, the target can be still localized based on the estimated delay at BS and UE, and the CRLB in given by Corollary~\ref{cor:singleAnt}.
\begin{corollary}\label{cor:singleAnt}
For the special case of single receive antenna at the BS, i.e., $N_R=1$, the CRLB of target localization with BS-UE cooperation based the delay estimated at BS and UE is given by
\begin{align}
C_{h,s}(\mathbf{q},\mathbf{q}_U)&=\frac{c^2}{{\mathcal{I}_{\tau_U}}\sin^2\psi}+\frac{c^2}{2(1-\cos\psi)\mathcal{I}_{\tau_B}}\nonumber\\
&=c_3\frac{r_B^2r_U^2}{\sin^2\psi}+c_4\frac{r_B^4}{1-\cos\psi}.\label{eq:CRLBhs}
\end{align}
where
\begin{align}
c_3=\frac{3}{2\pi^2W^2KM\bar{\Upsilon}_U}, \ c_4=\frac{3}{2\pi^2W^2KM\bar{\Upsilon}_B}, \label{eq:C34}
\end{align}
are constants independent of target and UE positions, with $\bar{\Upsilon}_B$ and $\bar{\Upsilon}_U$ being the coefficients of the SNR at BS and UE, by excluding the distance factor.
\end{corollary}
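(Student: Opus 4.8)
The plan is to obtain Corollary~\ref{cor:singleAnt} as a direct specialization of the Theorem, exploiting the fact that a single receive antenna destroys all angular information. From \eqref{eq:FishTheta0}, $\mathcal{I}_\theta\propto (N_R^2-1)$, so $N_R=1$ forces $\mathcal{I}_\theta=0$. Since the closed form \eqref{eq:CRLBjoint_polar} is a rational function of $\mathcal{I}_\theta$ that stays finite at $\mathcal{I}_\theta=0$ (the denominator remains strictly positive whenever $\sin\psi\neq0$ and $\mathcal{I}_{\tau_B},\mathcal{I}_{\tau_U}>0$), this substitution is legitimate by continuity. First I would set $\mathcal{I}_\theta=0$ in \eqref{eq:CRLBjoint_polar}: the $c^4\mathcal{I}_\theta$ term in the numerator and both $\mathcal{I}_\theta$-bearing terms in the denominator vanish, leaving
\begin{align}
C_{h,s}=\frac{2c^2 r_B^2\left[2\mathcal{I}_{\tau_B}+(1+\cos\psi)\mathcal{I}_{\tau_U}\right]}{4r_B^2\mathcal{I}_{\tau_B}\mathcal{I}_{\tau_U}\sin^2\psi}.\nonumber
\end{align}

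Next I would cancel the common factor $r_B^2$, split the single fraction into the term carrying $\mathcal{I}_{\tau_B}$ and the term carrying $\mathcal{I}_{\tau_U}$, and apply the identity $\sin^2\psi=(1-\cos\psi)(1+\cos\psi)$ to collapse $(1+\cos\psi)/\sin^2\psi$ into $1/(1-\cos\psi)$. This yields exactly the first line of \eqref{eq:CRLBhs}, namely $C_{h,s}=c^2/\left(\mathcal{I}_{\tau_U}\sin^2\psi\right)+c^2/\left(2(1-\cos\psi)\mathcal{I}_{\tau_B}\right)$, which transparently separates the UE delay contribution from the BS delay contribution.

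To reach the second line I would insert the Fisher-information expressions \eqref{eq:FishTauB} and \eqref{eq:FishTauU0} with $N_R=1$, using the large-$K$ simplification $\Delta f^2(K^2-1)\approx W^2$ already adopted in \eqref{eq:Itheta1} so that $\mathcal{I}_{\tau_B}\approx \tfrac{2\pi^2 W^2 MK}{3}\Upsilon_B$ and $\mathcal{I}_{\tau_U}\approx \tfrac{2\pi^2 W^2 MK}{3}\Upsilon_U$. Substituting the distance-dependent SNRs $\Upsilon_B=\bar\Upsilon_B/r_B^4$ and $\Upsilon_U=\bar\Upsilon_U/(r_B^2 r_U^2)$ from \eqref{eq:SNRm}, \eqref{eq:SNRu} and \eqref{eq:SNR} then makes the $r_B,r_U$ dependence explicit and collects every position-independent quantity into the constants $c_3,c_4$ of \eqref{eq:C34}, giving the stated $c_3 r_B^2 r_U^2/\sin^2\psi+c_4 r_B^4/(1-\cos\psi)$.

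The step I expect to require the most care is fixing the angle convention: one must read $\psi$ as the interior angle of the BS--target--UE triangle, for which $\cos\psi$ equals the inner product of the BS-to-target and UE-to-target unit directions, so that the $(1+\cos\psi)$ produced by the substitution pairs correctly with $\sin^2\psi$ to weight the BS delay term by $1/(1-\cos\psi)$; an opposite sign convention would spuriously swap $(1-\cos\psi)$ and $(1+\cos\psi)$. As an independent cross-check I would re-derive the result from the reduced $2\times2$ problem: retain only $\{\tau_B,\tau_U\}$ with diagonal information $\mathrm{diag}(\mathcal{I}_{\tau_B},\mathcal{I}_{\tau_U})$ and the matching two rows of \eqref{eq:Jq}, which are proportional to the BS-to-target unit vector $\hat{\mathbf u}_B$ and to $\hat{\mathbf u}_B+\hat{\mathbf u}_U$ with $\hat{\mathbf u}_U$ the UE-to-target unit vector. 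Writing $\mathbf F=\mathbf J^T\mathbf I\mathbf J$ as a sum of two rank-one matrices and using $\mathrm{Tr}(\mathbf F^{-1})=\mathrm{Tr}(\mathbf F)/\det(\mathbf F)$ together with $\hat{\mathbf u}_B\cdot\hat{\mathbf u}_U=\cos\psi$ (so $|\hat{\mathbf u}_B+\hat{\mathbf u}_U|^2=2(1+\cos\psi)$) and $\det\mathbf F\propto\mathcal{I}_{\tau_B}\mathcal{I}_{\tau_U}\sin^2\psi$ reproduces the same two terms and exposes the non-degeneracy requirement $\sin\psi\neq0$, i.e., the target must not be collinear with the BS and UE for localization to be feasible.
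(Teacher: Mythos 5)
Your proposal is correct and follows essentially the same route as the paper: the corollary is obtained by setting $\mathcal{I}_{\theta}=0$ in \eqref{eq:CRLBjoint_polar} (valid because the reduced $2\times 2$ information matrix remains invertible whenever $\sin\psi\neq 0$), simplifying via $\sin^2\psi=(1-\cos\psi)(1+\cos\psi)$, and then substituting the SNR-dependent Fisher informations, with your rank-one cross-check being a nice but inessential addition. One minor observation: carrying the delay-to-distance conversion factor through, your substitution actually yields $c_3=3c^2/(2\pi^2W^2KM\bar{\Upsilon}_U)$ and $c_4=3c^2/(2\pi^2W^2KM N_R\bar{\Upsilon}_B)$ with $N_R=1$, i.e., an extra $c^2$ compared with \eqref{eq:C34}; since this mirrors the $c^2$ retained in $c_2$ of Corollary~\ref{cor:mono}, it points to a dropped factor in the paper's stated constants rather than a flaw in your derivation.
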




Since UE has a single antenna,  BS-UE bi-static sensing alone cannot locate the target, which is evident by $C_h(\mathbf{q},\mathbf{q}_U)\rightarrow\infty$ as $\mathcal{I}_{\tau_B}\rightarrow 0$ and $\mathcal{I}_{\theta}\rightarrow 0$ in \eqref{eq:CRLBjoint_polar}. However,  fusing the delay estimation at the UE can reduce the CRLB of target localization with the net CRLB reduction given by
\begin{align}
&\Delta_{C}(\mathbf{q},\mathbf{q}_U)=C_{\mathrm{mono}}(\mathbf{q})-C_h(\mathbf{q},\mathbf{q}_U)=\frac{\mathcal I_{\tau_U}}{4\mathcal{I}_{\tau_B}\mathcal{I}_{\theta}}\nonumber\\
&\cdot\frac{16\mathcal{I}_{\tau_B}^2r_B^4\sin^2\psi+c^4\mathcal{I}_{\theta}^2(1+\cos\psi)^2}{4c^2\mathcal{I}_{\tau_B}\mathcal{I}_{\theta}+4r_B^2\mathcal{I}_{\tau_B}\mathcal{I}_{\tau_U}\sin^2\psi+c^2\mathcal{I}_{\theta}\mathcal{I}_{\tau_U}\left(1+\cos\psi\right)^2}.\label{eq:netExp}
\end{align}
 Since all terms in \eqref{eq:netExp} are positive, we can conclude that $\Delta_{C}(\mathbf{q},\mathbf{q}_U)\geq 0$ regardless where the UE and target locate. The quality holds when $\psi=\pi$. This implies that the sensing results from the UE does not improve the position accuracy if the UE locates on the same line as the BS and Target.

In practice, the localization error is usually measured by the PEB, which is defined as the square root of the CRLB. Hence, we can directly obtain the PEB for mono-static sensing and hybrid sensing as
\begin{align}
&PEB_{\mathrm{mono}}(\mathbf{q})=\sqrt{C_{\mathrm{mono}}(\mathbf{q})}, \label{eq:PEBmono}\\
&PEB_h(\mathbf{q},\mathbf{q}_U)=\sqrt{C_h(\mathbf{q},\mathbf{q}_U)}. \label{eq:PEBjoint}
\end{align}


\begin{example}
To interpret the derived localization CRLB, we present the results visually  for different target and UE locations. Throughout the paper, we adopt an ISAC signal modulated based on OFDM with carrier frequency $f_c=24$GHz, subcarrier spacing $\Delta f=120$kHz, number of subcarriers $K=100$, and number of OFDM symbols  $M=14$. The number of transmit and receive antennas are $N_T=N_R=4$, and the distance-independent transmit SNR is identical for the BS and UE, given by $\bar{\Upsilon}_B=\bar{\Upsilon}_U=98$dB,  unless otherwise specified.

Fig.~\ref{F:PEBwUELoc}(a) shows the PEB for potential targets within the square area bounded by $0\leq x\leq 300\ m$ and $-150\leq y\leq 150\ m$ with BS mono-static sensing. It is observed that PEB less than 1m can only be achieved if the targets are within a small area close to the BS. On the other hand, if there is a UE located at $\mathbf{q}_U=[300,0]^T$, the PEB can be greatly reduced, as shown in Fig.~\ref{F:PEBwUELoc}(b).

\begin{figure}[htb]
\centering
\begin{subfigure}{0.24\textwidth}
\centering
\includegraphics[width=\textwidth]{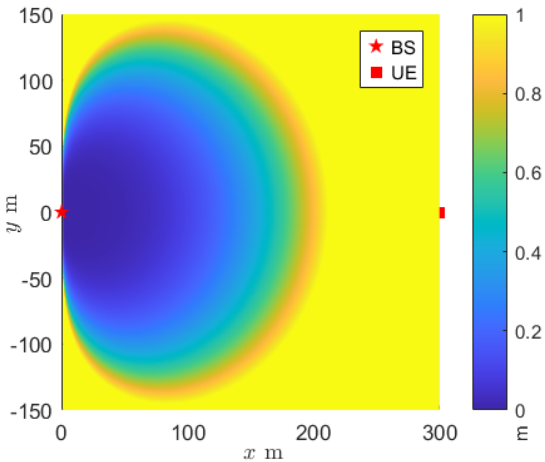}
\caption{BS mono-static sensing}
\end{subfigure}
\begin{subfigure}{0.24\textwidth}
\centering
\includegraphics[width=\textwidth]{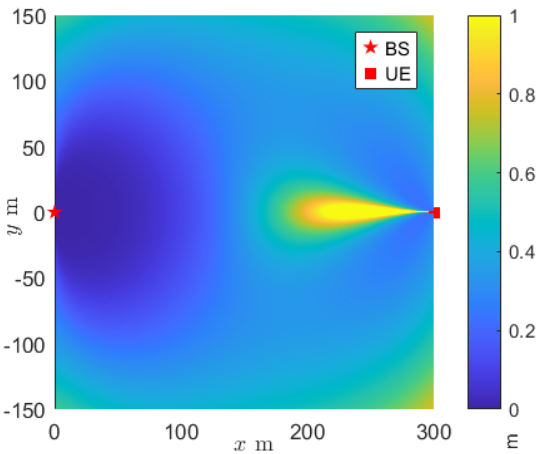}
\caption{hybrid  sensing}
\end{subfigure}
\caption{The comparison of PEB achievable by the BS mono-static versus hybrid sensing with given UE position.}
\label{F:PEBwUELoc}
\end{figure}

Another way to interpret \eqref{eq:CRLBjoint_polar} is to plot the PEB for a fixed target position, e.g., $\mathbf{q}=[200,50]^T$ and varying the cooperating UE position. With BS mono-static sensing, the achievable PEB is 0.96m. Consider a potential UE with position varying in the area $0\leq x_U\leq 300\ m$ and $-100\leq y_U\leq 150\ m$. As shown in Fig.\ref{F:PEBwTargetLoc}(a), the PEB can be reduced to lower than 0.1m if the UE is within the "8" shape region around the target. The "8" shape is enlarged if we scale the PEB limit to 0.6m.

\begin{figure}[htb]
\centering
\begin{subfigure}{0.24\textwidth}
\centering
\includegraphics[width=\textwidth]{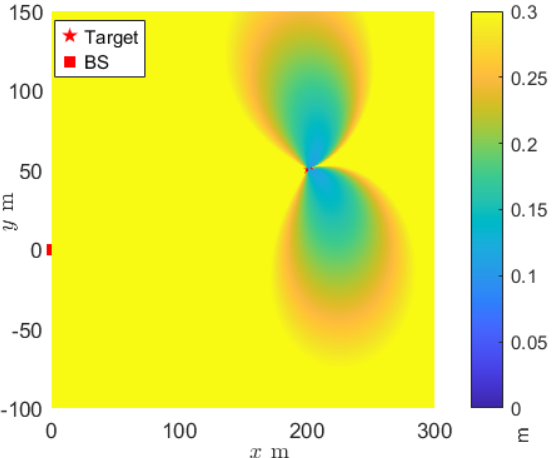}
\caption{$PEB_h$ limit by 0.3m}
\end{subfigure}
\begin{subfigure}{0.24\textwidth}
\centering
\includegraphics[width=\textwidth]{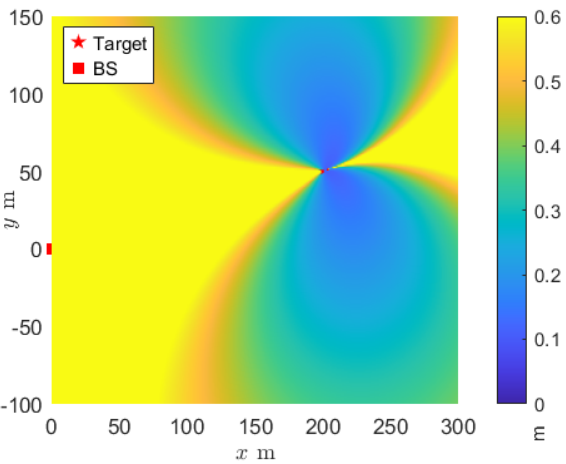}
\caption{$PEB_h$ limit by 0.6m}
\end{subfigure}
\caption{The comparison of PEB with given target position.}
\label{F:PEBwTargetLoc}
\end{figure}
\end{example}

\subsection{Insights and Possible Extension of the Analysis}

Based on the analysis presented in the preceding subsection, we note that the sensing performance of the hybrid mono- and bi-static sensing for OFDM-ISAC is mainly affected by the relative position of target and UE. In particular, we can draw the following insights from the analysis.
\begin{itemize}
\item It is observed that $C_h(\mathbf{q},\mathbf{q}_U)$ is an even function of $\psi$. Consequently, the reduction in CRLB depends only on $|\psi|$. This implies that two UEs symmetric  with respect to the line connecting the BS and the target provide identical contribution to the sensing.
\item If $\mathcal{I}_{\tau_U}\gg \mathcal{I}_{\tau_B}$ and $\mathcal{I}_{\tau_U}\gg \mathcal{I}_{\theta}$,  the CRLB converges to
    \begin{align}
    C_h&(\mathbf{q},\mathbf{q}_U)=\frac{2(1+\cos\psi)}{\frac{4\mathcal{I}_{\tau_B}}{c^2}\sin^2\psi+\frac{\mathcal{I}_{\theta}}{r_B^2}(1+\cos\psi)^2}\nonumber\\
    &=\frac{2}{\left(\frac{4\mathcal{I}_{\tau_B}}{c^2}+\frac{\mathcal{I}_{\theta}}{r_B^2}\right)-\left(\frac{4\mathcal{I}_{\tau_B}}{c^2}-\frac{\mathcal{I}_{\theta}}{r_B^2}\right)\cos\psi}\nonumber\\
    &\geq\min\left\{\frac{c^2}{4\mathcal{I}_{\tau_B}},\frac{r_B^2}{\mathcal{I}_{\theta}}\right\}.
    \label{eq:CRLBjLimit}
    \end{align}
    This implies that the CRLB does not vanish even when the delay estimation at the UE is arbitrarily accurate. This is because the BS-UE bi-static sensing alone cannot estimate the target position, due to the lack of angle estimation capability at UE.
\end{itemize}

Next, we consider a possible extension of the analytical results presented in Theorem~1. In particular, we may incorporate prior information about  $\theta$ and  account for the synchronization error between BS and UE.

\subsubsection{Incorporate the prior information on $\theta$}
When there is a single receiving antenna at the BS, i.e., $N_R=1$, we have $\mathcal{I}_{\theta}=0$ and hence $C_{\mathrm{mono}}(\mathbf{q})\rightarrow\infty$. In practice, under mono-static sensing with only delay  information available, the target's position can be confined to a ring or a semicircular arc centered at the BS, thank to the prior information of $\theta$, e.g.,  $\theta \in(-\pi, \pi]$ or $\theta \in(-\pi/2, \pi/2]$. If we assume that the target direction $\theta$ is uniformly distributed within $(-\pi/2, \pi/2]$. The impact of uniform prior information on the optimal estimation based measurement data is negligible \cite{kay1993fundamentals}, but it introduces
additional Fisher information, which revise $\mathcal{I}_{\theta}$ as
\begin{align}
\mathcal{\tilde{I}}_{\theta}\approx \frac{\pi^2 KM(N_R^2-1)N_R\Upsilon_B\cos^2(\theta)}{6}+\frac{12}{\pi^2}, \label{eq:FishTheta}
\end{align}
where the second term $\frac{12}{\pi^2}$ is negligible compared with first term if $N_R$ is relatively large, but it ensures $\mathbf{I}(\Theta_1)$ to be full rank and hence a finite value of $C_{\mathrm{mono}}(\mathbf{q})$ even when $N_R=1$.

\subsubsection{Account for the random synchronization error}The Fisher information of $\mathcal{I}_{\tau_U}$ in \eqref{eq:FishTauU0} is derived based on the assumption that the BS and  UE are perfectly synchronized. In practice, the synchronization error can be modelled as a random variable $\epsilon\sim \mathcal{N}(0,\sigma_s^2)$. Hence, the CRLB for estimating $\tau_U$ consists two parts, one from estimation error and the other from synchronization error, i.e,.
\begin{align}
\mathrm{var}(\hat{\tau}_U)\geq \frac{1}{\mathcal{I}_{\tau_U}}+\sigma_s^2.
\end{align}

Hence, the Fisher information of $\tau_U$ can be revised as
\begin{align}
\mathcal{\tilde I}_{\tau_U}=\frac{\mathcal{I}_{\tau_U}}{1+\mathcal{I}_{\tau_U}\sigma_s^2}. \label{eq:FisherTauU}
\end{align}

By substituting $\mathcal{I}_{\theta}$ and $\mathcal{I}_{\tau_U}$ with $\mathcal{\tilde{I}}_{\theta}$ and $\mathcal{\tilde I}_{\tau_U}$ in \eqref{eq:FishTheta} and \eqref{eq:FisherTauU}, we can obtain the CRLB for target localization in the hybrid mono- and bi-static sensing, in the presence of prior information and BS-UE synchronization error.

\subsection{CRLB for Target Velocity Estimation}
In this subsection, we extend the analysis to target velocity estimation. With either BS mono-static or BS-UE bi-static sensing, only the radial velocity $\mathbf{v}_B$ or $\mathbf{v}_U$ can be estimated from the measured Doppler $f_{D}^B$ and $f_{D}^U$. As shown in Fig.~\ref{F:VelCoordinate}, the radial velocity vector $\mathbf{v}_B$ is along the line connecting the BS and target, while the radial velocity vector $\mathbf{v}_U$ is aligned along the bisector of the bi-static angle \cite{6034675}, and their values are
\begin{align}
\|\mathbf v_B\|&=\left(-\frac{\mathbf{q}}{\|\mathbf q\|}\right)^T\cdot \mathbf{v}=\left[\begin{matrix}-\cos\theta \\ -\sin\theta\end{matrix}\right]^T\left[\begin{matrix}v_x \\ v_y\end{matrix}\right], \label{eq:radialVelB}\\
\|\mathbf v_U\|&=\left(-\frac{\mathbf{q}}{\|\mathbf q\|}-\frac{\mathbf{q}-\mathbf{q}_U}{\|\mathbf{q}-\mathbf{q}_U\|}\right)^T\cdot \mathbf{v}\nonumber\\
&=\left[\begin{matrix}\cos\theta-\cos\phi \\ \sin\theta-\sin\phi\end{matrix}\right]^T\left[\begin{matrix}v_x \\ v_y\end{matrix}\right]. \label{eq:radialVelU}
\end{align}

\begin{figure}[htb]
\centering
\includegraphics[width=0.3\textwidth]{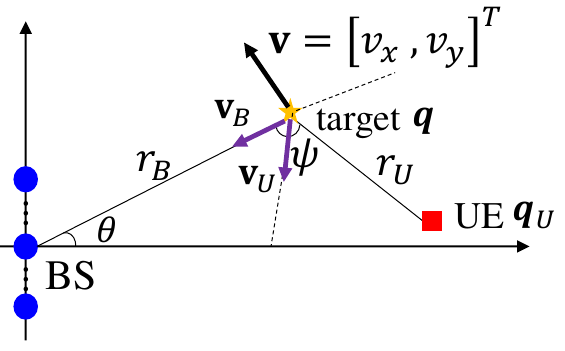}
\caption{Target velocity and measured Doppler by BS and UE.}
\label{F:VelCoordinate}
\end{figure}

With hybrid mono- and bi-static sensing, the velocity vector can be estimated from the measured Doppler at BS and UE. Let $\Theta_2=\{f_D^B, f_{D}^U\}\subset\Theta$, and it is related with velocity vector as
\begin{align}
\Theta_2=\mathbf{g}_2(\mathbf{v})=\left[\begin{matrix}\frac{2}{\lambda_c}\|\mathbf v_B\| \\ \frac{1}{\lambda_c}\|\mathbf v_U\| \end{matrix}\right].\label{eq:g2}
\end{align}
Since the estimation of $f_D^B$ and $f_D^U$ are independent, the EFIM for $\Theta_2$ can be directly obtained from the EFIM for each element,  i.e.,
\begin{align}
 \mathbf{I}(\Theta_2)=\left[\begin{matrix}\left[\mathbf{I}(\Theta_B)\right]_{44} & 0 \\ 0 & \left[\mathbf{I}(\Theta_U)\right]_{33}\end{matrix}\right]=\left[\begin{matrix}\mathcal{I}_{f_D^B} & 0 \\ 0 & \mathcal{I}_{f_D^U}\end{matrix}\right],
\end{align}
where $\mathcal{I}_{f_D^B}$ and $\mathcal{I}_{f_D^U}$ are the Fisher Information for  $f_D^B$ and $f_D^U$, respectively. According to \cite{pucci2025}, we have
\begin{align}
\mathcal{I}_{f_D^B}&=\frac{2\pi^2T_s^2KM(M^2-1)N_R\Upsilon_B}{3}\label{eq:FisherfDB},\\
\mathcal{I}_{f_D^U}&=\frac{2\pi^2T_s^2KM(M^2-1)\Upsilon_U}{3}\label{eq:FisherfDU}.
\end{align}
 The Jacobian of the transformation from $(f_D^B,f_D^U)$ to target velocity $\mathbf{v}=[v_x,v_y]^T$ is derived from \eqref{eq:radialVelB},\eqref{eq:radialVelU} and \eqref{eq:g2} as
\begin{align}
&\mathbf{J}_{\mathbf v}=\left[\begin{matrix}\frac{\partial f_{D}^B}{\partial v_x} & \frac{\partial f_{D}^B}{\partial v_y} \\
\frac{\partial f_{D}^U}{\partial v_x} & \frac{\partial f_{D}^U}{\partial v_y} \end{matrix}\right]=\frac{1}{\lambda_c}\left[\begin{matrix}-2\cos\theta & -2\sin\theta \\
\cos\theta-\cos\phi & \sin\theta-\sin\phi \end{matrix}\right]
\end{align}

Hence, the CRLB for velocity estimation in hybrid mono- and bi-static  sensing mode is
\begin{align}
C_{\mathbf{v}}(\mathbf{q},\mathbf{q}_U)&=\mathrm{Tr}\left((\mathbf{J}_{\mathbf v}^T\mathbf{I}(\Theta_2)\mathbf{J}_{\mathbf v})^{-1}\right)\nonumber\\
&=\lambda_c^2\cdot\frac{2\mathcal{I}_{f_D^B}+(1-\cos(\theta-\phi))\mathcal{I}_{f_D^U}}{2\mathcal{I}_{f_D^B}\mathcal{I}_{f_D^U}\sin^2(\theta-\phi)}\nonumber\\
&=\frac{\lambda_c^2}{\mathcal{I}_{f_D^U}\sin^2\psi}+\frac{\lambda_c^2}{2\mathcal{I}_{f_D^B}(1-\cos\psi)}. \label{eq:CRLBV}
\end{align}

Since $\mathcal{I}_{f_D^B}$ and $\mathcal{I}_{f_D^U}$ are only related with the target and UE positions through the SNR, we can express the CRLB of velocity estimation as the geometric relationship among BS, target and UE.
\begin{theorem}
The CRLB for velocity estimation with hybrid mono- and bi-static OFDM-ISAC is  given by
\begin{align}
C_{\mathbf{v}}(\mathbf{q},\mathbf{q}_U)&=c_5\frac{r_B^2 r_U^2}{\sin^2\psi}+c_6\frac{r_B^4}{1-\cos\psi}\label{eq:CRLBvelocity}
\end{align}
where
\begin{align*}
c_5&=\frac{3\lambda_c^2}{2\pi^2T_s^2KM(M^2-1)\bar{\Upsilon}_U},\\
c_6&=\frac{3\lambda_c^2}{4\pi^2T_s^2KM(M^2-1)N_R\bar{\Upsilon}_B},
\end{align*}
 are constants independent of target and UE positions
\end{theorem}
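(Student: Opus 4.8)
The plan is to take the closed form \eqref{eq:CRLBV} as the starting point — it already expresses $C_{\mathbf v}(\mathbf q,\mathbf q_U)$ as two terms in the Doppler Fisher informations $\mathcal{I}_{f_D^B}$, $\mathcal{I}_{f_D^U}$ and the angle $\psi$ — and to convert those Fisher informations into explicit functions of the ranges $r_B$ and $r_U$. I recall that \eqref{eq:CRLBV} itself follows because $\mathbf I(\Theta_2)$ is diagonal, so $\mathrm{Tr}((\mathbf J_{\mathbf v}^T\mathbf I(\Theta_2)\mathbf J_{\mathbf v})^{-1})$ reduces to $\mathrm{Tr}(\mathbf M)/\det(\mathbf M)$ for the $2\times2$ matrix $\mathbf M=\mathbf J_{\mathbf v}^T\mathbf I(\Theta_2)\mathbf J_{\mathbf v}$: the squared row norms of $\mathbf J_{\mathbf v}$ (equal to $4$ and $2(1-\cos(\theta-\phi))$) give $\mathrm{Tr}(\mathbf M)$, while $\det(\mathbf M)$ produces a $\sin^2(\theta-\phi)$ factor that the BS-target-UE triangle geometry turns into the $\psi$-dependence shown. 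I treat this reduction as given, since it is carried out in the text immediately preceding the theorem.

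The substantive step is the SNR-to-range substitution. From \eqref{eq:SNRm}, \eqref{eq:SNRu} and \eqref{eq:SNR}, the operating SNRs factor cleanly as $\Upsilon_B=\bar\Upsilon_B/r_B^4$ and $\Upsilon_U=\bar\Upsilon_U/(r_B^2 r_U^2)$, where $\bar\Upsilon_B,\bar\Upsilon_U$ collect every distance-independent quantity (including the data-division penalty $\eta$) exactly as in Corollaries~\ref{cor:mono} and~\ref{cor:singleAnt}. Inserting these into \eqref{eq:FisherfDB} and \eqref{eq:FisherfDU} yields $\mathcal{I}_{f_D^B}=\tfrac{2\pi^2 T_s^2 KM(M^2-1)N_R\bar\Upsilon_B}{3r_B^4}$ and $\mathcal{I}_{f_D^U}=\tfrac{2\pi^2 T_s^2 KM(M^2-1)\bar\Upsilon_U}{3r_B^2 r_U^2}$, so each Fisher information now carries its entire range dependence as a single monomial in $r_B$ and $r_U$.

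Substituting these into the two terms of \eqref{eq:CRLBV} then makes $r_B^2 r_U^2$ surface in the first term and $r_B^4$ in the second, while the angular factors $\sin^2\psi$ and $1-\cos\psi$ are left untouched. Factoring out the remaining distance-independent constants identifies $c_5=\tfrac{3\lambda_c^2}{2\pi^2 T_s^2 KM(M^2-1)\bar\Upsilon_U}$ from the first term and $c_6=\tfrac{3\lambda_c^2}{4\pi^2 T_s^2 KM(M^2-1)N_R\bar\Upsilon_B}$ from the second (the factor $2$ multiplying $\mathcal{I}_{f_D^B}$ in \eqref{eq:CRLBV} producing the $4$ in $c_6$), which gives exactly \eqref{eq:CRLBvelocity}. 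Since the matrix inversion and the geometric reduction are already in hand, no genuine difficulty remains and the argument is essentially bookkeeping; the one point warranting care is keeping the definitions of $\bar\Upsilon_B$ and $\bar\Upsilon_U$ consistent so that the monomials $r_B^4$ and $r_B^2 r_U^2$ are extracted correctly and the numerical coefficients in $c_5,c_6$ — in particular the factor of two — come out right.
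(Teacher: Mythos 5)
Your proposal is correct and follows essentially the same route as the paper: the paper derives \eqref{eq:CRLBV} in the text immediately before the theorem via $\mathrm{Tr}(\mathbf M)/\det(\mathbf M)$ for the $2\times 2$ matrix $\mathbf M=\mathbf J_{\mathbf v}^T\mathbf I(\Theta_2)\mathbf J_{\mathbf v}$, then obtains the theorem by noting that $\mathcal{I}_{f_D^B}$ and $\mathcal{I}_{f_D^U}$ depend on position only through the SNRs $\Upsilon_B=\bar\Upsilon_B/r_B^4$ and $\Upsilon_U=\bar\Upsilon_U/(r_B^2 r_U^2)$, exactly your substitution. Your bookkeeping of the range monomials and of the factor of two that turns the $2\mathcal{I}_{f_D^B}$ in \eqref{eq:CRLBV} into the $4$ in $c_6$ matches the paper's constants.
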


In practice, the velocity estimation error is usually measured by the VEB, which is defined as the square root of the CRLB for $\mathbf{v}$ as
\begin{align}
VEB(\mathbf{q},\mathbf{q}_U)=\sqrt{C_{\mathbf v}(\mathbf{q},\mathbf{q}_U)}.
\end{align}

\begin{example}
Under the same set of parameters as Example 1, Fig.~\ref{F:VEBresults}(a) shows the VEB for the potential targets with $0\leq x\leq 300\ m$ and $-150\leq y\leq 150\ m$ and the fixed cooperating UE position at $\mathbf{q}_U=[300,0]^T$. It is observed that a PEB lower than $2$ m/s can be achieved for majority part of the region, and the PEB is lower than $2$m/s when the targets are close to the BS or UE.  Fig.~\ref{F:VEBresults}(b) shows the VEB  of the target at $\mathbf{q}=[200,50]^T$  with the measurement results from the potential UE with the area. Similar to  PEP, it is observed that the PEB is smaller if the UE is close to the target, due to the larger SNR of the bi-static sensing signal.

\begin{figure}[htb]
\centering
\begin{subfigure}{0.24\textwidth}
\centering
\includegraphics[width=\textwidth]{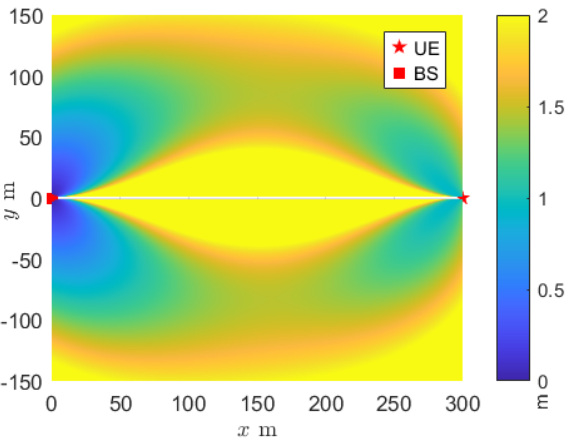}
\caption{Fixed UE at $[300,0]^T$}
\end{subfigure}
\begin{subfigure}{0.24\textwidth}
\centering
\includegraphics[width=\textwidth]{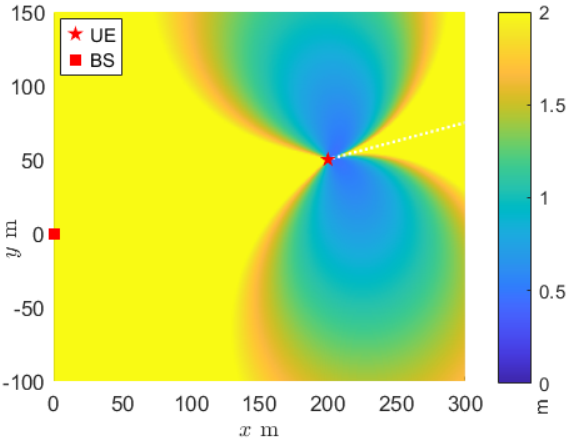}
\caption{Fixed target at $[200,50]^T$}
\end{subfigure}
\caption{The VEB for hybrid mono- and bi-static  sensing for given UE and target positions, respectively.}
\label{F:VEBresults}
\end{figure}
\end{example}

From the closed-form expression of the CRLB for  velocity estimation in \eqref{eq:CRLBvelocity}, we can obtain the following insights:
\begin{itemize}
\item The velocity estimation error bound $C_{\mathrm{v}}(\mathbf{q},\mathbf{q}_U)$  only depends on the sensing resource used (i.e., number of subcarriers $K$, number of OFDM symbols $M$, etc.) and the positions of target and UE, while independent  of the actually velocity of the target $\mathbf{v}$.
\item With either $\mathcal{I}_{f_D^U}\rightarrow 0$ or $\mathcal{I}_{f_D^B}\rightarrow 0$, we have $C_{\mathbf v}(\mathbf{q},\mathbf{q}_U)\rightarrow \infty$. This implies that either pure mono-static or bi-static sensing cannot estimate the target velocity.
\item When $\psi=0$ or $\psi=\pi$, $C_{\mathbf v}(\mathbf{q},\mathbf{q}_U)\rightarrow\infty$. This implies that accurate velocity estimation is not feasible if the BS, target and UE are on the same line.
\end{itemize}

\subsection{Summary of the Analytical Results}
The CRLB expressions for target position and velocity estimation with different sensing modes in OFDM-ISAC are summarized in Table~\ref{tb:summary} for convenience. By comparing the similarity between the CRLB for velocity estimation  and that for localization under the single-antenna assumption, we conclude that the optimal geometric configuration for achieving the best velocity estimation also provides a favorable geometry for accurate target localization.

\begin{table*}[htb]
\caption{A summary of CRLB for position and velocity estimation with various sensing modes}
\centering
\label{tb:summary}
\begin{tabular}{|c|c|c|c|c|}
\hline
{Sensing Mode} & \textbf{BS mono-static} & \textbf{BS-UE bi-static} & \textbf{Hybrid mono- and bi-static}  \\
\hline
\textbf{Position CRLB}  & $c_1\frac{r_B^6}{\cos^2\theta}+c_2r_B^4$ & $\infty$ & $C_{h}(\mathbf{q},\mathbf{q}_U)$ in \eqref{eq:CRLBjoint_polar} \\
\hline
\textbf{Position CRLB } ($N_R=1$)   & $\infty$ & $\infty$ & $c_3\frac{r_B^2r_U^2}{\sin^2\psi}+c_4\frac{r_B^4}{1-\cos\psi}$ \\
\hline
\textbf{Velocity CRLB }   & $\infty$ & $\infty$ & $c_5\frac{r_B^2 r_U^2}{\sin^2\psi}+c_6\frac{r_B^4}{1-\cos\psi}$ \\
\hline
\end{tabular}
\end{table*}

Furthermore, by differentiating the CRLB expressions with respect to $\psi$, we can find the optimal BS-target-UE geometry that minimize the CRLB for the hybrid mono- and bi-static sensing.
\begin{lemma}
The CRLB of target position/velocity estimation based on the delay/Doppler estimation at BS and UE is minimized when BS-target-UE forms an angle equal to
    \begin{align}
        \psi^*=\arccos\left(2\sqrt{\rho(\rho+1)}-2\rho-1\right),
    \end{align}
    where $\rho=\mathcal{I}_{\tau_B}/\mathcal{I}_{\tau_U}=\mathcal{I}_{f_D^B}/\mathcal{I}_{f_D^U}=N_Rr_U^2/r_B^2$ is the ratio between the Fisher information on delay/Doppler estimated at the UE and BS, respectively.
\end{lemma}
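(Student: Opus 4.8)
The plan is to exploit the fact that the two CRLBs being minimized share a single functional form in $\psi$. From \eqref{eq:CRLBhs} and \eqref{eq:CRLBV}, both the single-antenna position CRLB and the velocity CRLB can be written as $C(\psi)=P/\sin^2\psi+Q/(1-\cos\psi)$, where for the position case $P=c^2/\mathcal{I}_{\tau_U}$, $Q=c^2/(2\mathcal{I}_{\tau_B})$, and for the velocity case $P=\lambda_c^2/\mathcal{I}_{f_D^U}$, $Q=\lambda_c^2/(2\mathcal{I}_{f_D^B})$. In both cases $P/Q=2\rho$ with $\rho$ the Fisher-information ratio in the statement. First I would record that these ratios coincide: dividing \eqref{eq:FishTauB} by \eqref{eq:FishTauU0} gives $\mathcal{I}_{\tau_B}/\mathcal{I}_{\tau_U}=N_R\Upsilon_B/\Upsilon_U$, dividing \eqref{eq:FisherfDB} by \eqref{eq:FisherfDU} gives the same quantity, and substituting the SNRs from \eqref{eq:SNRm}--\eqref{eq:SNRu} (with equal RCS, so the distance-free coefficients match) yields $\rho=N_Rr_U^2/r_B^2$. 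This reduces both problems to minimizing the single scalar function $C(\psi)$ over $\psi\in(0,\pi)$.

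Next I would carry out the minimization via the substitution $u=\cos\psi$. Using $\sin^2\psi=(1-\cos\psi)(1+\cos\psi)$, the objective collapses to the rational function
\[
C(u)=\frac{P}{1-u^2}+\frac{Q}{1-u}=\frac{P+Q(1+u)}{1-u^2},\qquad u\in(-1,1).
\]
Differentiating and clearing the positive factor $(1-u^2)^2$, the stationarity condition is the quadratic $Qu^2+2(P+Q)u+Q=0$; dividing by $Q$ and inserting $P/Q=2\rho$ gives
\[
u^2+2(2\rho+1)u+1=0.
\]
Since $(2\rho+1)^2-1=4\rho(\rho+1)$, its roots are $u=-(2\rho+1)\pm 2\sqrt{\rho(\rho+1)}$.

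Finally I would settle the root selection and confirm the stationary point is a minimizer. For $\rho>0$ the minus-sign root satisfies $u<-1$ and is infeasible, whereas the plus-sign root $u^\star=2\sqrt{\rho(\rho+1)}-2\rho-1$ lies in $(-1,0)\subset[-1,1]$ (it tends to $-1$ as $\rho\to 0$ and to $0$ as $\rho\to\infty$) and is therefore admissible. Because $C(\psi)\to\infty$ as $\psi\to 0^+$ (where $\sin^2\psi\to 0$) and as $\psi\to\pi^-$ (where $1-\cos\psi\to 2$ while $\sin^2\psi\to 0$), the unique interior critical point must be the global minimizer, so $\psi^\star=\arccos\!\big(2\sqrt{\rho(\rho+1)}-2\rho-1\big)$. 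The calculation is otherwise routine; the only point requiring care is the root selection together with verifying that $u^\star$ lands in the admissible interval and corresponds to a minimum rather than a maximum, which the boundary blow-up argument resolves.
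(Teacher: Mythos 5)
Your proof is correct and follows essentially the same route the paper indicates: the paper states the lemma after remarking that it follows ``by differentiating the CRLB expressions with respect to $\psi$,'' and your substitution $u=\cos\psi$ is just a clean way of carrying out that differentiation, leading to the same quadratic $Qu^2+2(P+Q)u+Q=0$ and the root $u^\star=2\sqrt{\rho(\rho+1)}-2\rho-1$. In fact your write-up supplies details the paper omits --- the verification that $\mathcal{I}_{\tau_B}/\mathcal{I}_{\tau_U}=\mathcal{I}_{f_D^B}/\mathcal{I}_{f_D^U}=N_Rr_U^2/r_B^2$ (under matched distance-free SNR coefficients), the rejection of the infeasible root, and the boundary blow-up argument establishing that the unique interior critical point is the global minimizer.
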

In general, the optimal angle $\psi^*$ is within the range $(\frac{\pi}{2},\pi)$.
    When the delay/Doppler estimation at the BS is much more accurate than that estimated by UE, i.e., $\rho\gg 1$, the velocity estimation error $C_{\mathbf v}(\mathbf{q},\mathbf{q}_U)$ is minimized when  $\psi\rightarrow\frac{\pi}{2}$.
%

\section{Verification of the Analytical Results} \label{sec:verify}
The analysis in the preceding section quantifies the reduction of PEB by fusing the delay measurement from the UE as a function of the target and UE positions. This section aims to verify the analytical results using a practical  parameter estimation algorithm. In particular, the target localization based on BS mono-static sensing can be formulated as
\begin{align}
\textbf{P1:}\quad \mathbf{\hat q}_{\mathrm{mono}}=\arg\min_{\mathbf{q}}\sum_{m=0}^{M-1}\sum_{k=0}^{K-1}\left\|\frac{\mathbf{y}_{km}^B}{d_{km}}-\mathbf{f}_B(\mathbf{q})\right\|^2
\end{align}
where $\mathbf{f}_B(\mathbf{q})=\mathbf{b}(\theta)\alpha_B\beta e^{-j2\pi k\Delta f\tau_B}e^{j2\pi mT_sf_{D}^{B}}$, with $\tau_B=2\|\mathbf{q}\|/c$, $\theta=\arctan(\mathbf{q}(2)/\mathbf{q}(1))$.

Solving \textbf{P1} directly is challenging, since the complex path gain $\alpha_B$ and the Doppler shift $f_D^B$ in $\mathbf{f}_B(\mathbf{q})$ are also unknown variables. To this end, we adopt a suboptimal, yet efficient, algorithm for solving \textbf{P1}, which contains the following two steps:
\begin{itemize}
\item{Estimate the parameter set $\Theta_B=\{\hat{\tau}_B,\hat{f}_D^B,\hat{\theta}\}$ based on the received signal $\mathbf{y}_{km}^B$.  }
\item{Estimate the target position $\mathbf{\hat q}_{\mathrm{mono}}$ from $(\hat\tau_B,\hat{\theta})$}.
\end{itemize}
Since this paper focuses on the performance gain achieved by fusing the UE results, rather than optimizing a single parameter estimation, we adopt sequential estimation for the parameters in $\Theta_B$ based on FFT for computational efficiency.  The procedures of sequential parameter estimation follow the description in \cite{DaiQiang}, which are summarized in Algorithm~\ref{alg1}. Next, since $\mathbf{q}$ can be written as deterministic function of $\tau_B$ and $\theta$, we directly compute
\begin{align}
\mathbf{\hat q}_{\mathrm{mono}}=\frac{c\hat{\tau}_B}{2}\left[\cos\hat{\theta}, \sin\hat{\theta}\right]^T.
\end{align}

\begin{algorithm}[htb]
	\caption{OFDM-ISAC Sequential Parameter Estimation}
	\begin{algorithmic}[1]
		\STATE{\textbf{Input} : Received signal after data removal, i.e.,  $\mathbf{\tilde y}_{km}^B={\mathbf{y}_{km}^B}/{d_{km}}, m=0,...,M-1, k=0,...,K-1$}.
        \STATE{\textbf{Initialize}: FFT over-sampling factor in angular, Doppler and delay dimension $L_{\theta}$, $L_{D}$, $L_{\tau}$.}
        \STATE{Obtain $\hat{\theta}$ by performing $L_{\theta}N_R$-point FFT over each vector $\mathbf{\tilde y}_{km}^B$ and sum over all the $K$ subcarriers and $M$ OFDM symbols. }
        \STATE{Perform beamforming over signal received at different antennas based on $\hat{\theta}$, i.e., $\tilde{y}_{km}^B=\mathbf{b}(\hat{\theta})^H\mathbf{\tilde y}_{km}^B$}
        \STATE{Obtain $\hat{\tau}_B$ by performing $L_{\tau}M$ point FFT over $\{\sum_{m=0}^{M-1}\tilde{y}_{km}^B, k=0,...,K-1\}$.}\label{line:tau}
        \STATE{Obtain $\hat{f}_D^B$ by performing $L_{D}M$ point IFFT over the vector $\{\sum_{k=0}^{K-1}e^{-jk\Delta f\hat{\tau}_B}\tilde{y}_{km}^B, m=0,...,M-1\}$ .}\label{line:fD}
        \STATE{\textbf{Output}: $\hat{\theta}$, $\hat{\tau}_B$, $\hat{f}_D^B$.}
	\end{algorithmic}
	\label{alg1}
\end{algorithm}

For the hybrid mono- and bi-static sensing, the target localization is formulated as
\begin{align}
\textbf{P2:} \arg\min_{\mathbf{q}}\sum_{m=0}^{M-1}\sum_{k=0}^{K-1}\left\|\frac{\mathbf{y}_{km}^B}{d_{km}}-\mathbf{f}_B(\mathbf{q})\right\|^2+\left|\frac{y_{km}^u}{d_{km}}-f_U(\mathbf{q})\right|^2,
\end{align}
where $f_U(\mathbf{q})=\alpha_U\beta e^{-j2\pi k\Delta f\tau_U}e^{j2\pi mT_sf_{D}^{U}}$. Similarly, we first obtain $(\hat\tau_U, \hat{f}_D^U)$ using the standard OFDM radar receiver processing, i.e., line \ref{line:tau}-\ref{line:fD} in Algorithm~\ref{alg1}. Then, the target localization is obtained by solving the weighted MSE minimization problem \textbf{P3} in \eqref{eq:P3}, where the weighting factor $\mathrm{var}(\hat\tau_B)=1/\mathcal{I}_{\tau_B}$, $\mathrm{var}(\hat\theta)=1/\mathcal{I}_{\theta}$ and $\mathrm{var}(\hat\tau_U)=1/\mathcal{I}_{\tau_U}$ are the expected variance for estimation of $\hat{\tau}_B$, $\hat{\theta}$ and $\hat{\tau}_U$. Although the closed-form solution to \textbf{P3} is not available,  efficient iterative algorithms for minimizing the weighted MSE have been well studied, and we adopt the classical Gauss-Newton algorithm for its simplicity.
\begin{figure*}
\begin{align}
\textbf{P3:}\quad
\mathbf{\hat q}_h=\arg\min_{\mathbf{q}}\left[\frac{(\hat{\tau}_B-2\|\mathbf{q}\|/c)^2}{\mathrm{var}(\hat\tau_B)}+\frac{(\hat{\tau}_U-(\|\mathbf{q}\|+\|\mathbf{q}-\mathbf{q}_U\|)/c)^2}{\mathrm{var}(\hat\tau_U)}+\frac{(\hat{\theta}-\arctan(\mathbf{q}(2)/\mathbf{q}(1))^2}{\mathrm{var}(\hat\theta)}\right].\label{eq:P3}
\end{align}
\end{figure*}

For target velocity estimation, we first calculate the directional vector  based on the estimated target position $\mathbf{\hat{q}}_h$, i.e.,
\begin{align}
\mathbf{\hat u}_t=-\frac{\mathbf{\hat{q}}_h}{\|\mathbf{\hat{q}}_h\|}, \quad \mathbf{\hat u}_r=-\frac{\mathbf{q}_U-\mathbf{\hat{q}}_h}{\|\mathbf{q}_U-\mathbf{\hat{q}}_h\|}.\label{eq:unitVec}
\end{align}

Then, according in \eqref{eq:radialVelB}-\eqref{eq:radialVelU}, the estimated velocity can be directly computed from the estimated Doppler $\hat{f}_D^B$ and $\hat{f}_D^U$ as
\begin{align}
\mathbf{\hat{v}}=g_2^{-1}(\Theta_2)=\left[\begin{matrix}\mathbf{\hat u}_t^T \\ (\mathbf{\hat u}_r-\mathbf{\hat u}_t)^T\end{matrix}\right]^{-1}
\left[\begin{matrix}\hat{f}_D^B \\ \hat{f}_D^U\end{matrix}\right].\label{eq:estiVel}
\end{align}

\begin{example}
Based on the OFDM setting as Example 1, we compare the position estimation error using the practical algorithms described above with the analytical results given by \eqref{eq:PEBmono} and \eqref{eq:PEBjoint}, for a target located at $\mathbf{q}=[200,50]^T$. The UE position is $\mathbf{q}_U=[0,300]^T$. The position estimation error is plotted versus the receive SNR at the BS in Fig.~\ref{F:PEBcompare}. According to the SNR expression in \eqref{eq:SNRm} and \eqref{eq:SNRu}, the receive SNR at the UE is proportional to that at the BS, with a scaling factor $\frac{\kappa_Ur_B^2}{\kappa_Br_U^2}$.
To ensure that the single-parameter estimates are sufficiently close to the CRLB, we use a large oversampling factor for the FFT\footnote{It has been noted in \cite{Braun2014} that FFT with sufficiently large oversampling is equivalent to maximum likelihood (ML) estimation for OFDM radar when there is only a single target. },  i.e., $L_{\theta}=2048$ and $L_{\tau}=1024$. The results are averaged over 500 Monte Carlo simulations.
It is observed from Fig.~\ref{F:PEBcompare} that the simulation results approach the analytical PEB for both BS mono-static and hybrid sensing modes, when the receive SNR gets larger.
\begin{figure}[htb]
\centering
\includegraphics[width=0.4\textwidth]{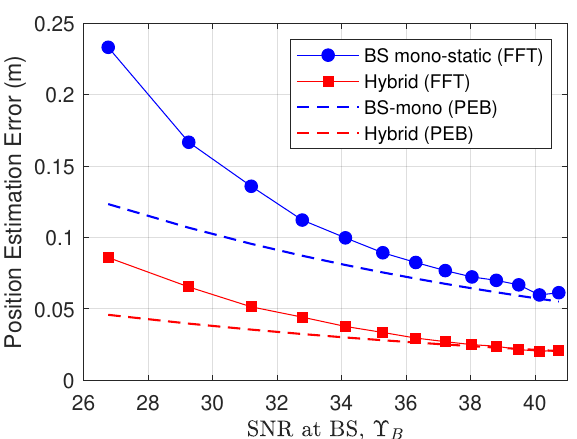}
\caption{Compare the position estimation error with PEB.}
\label{F:PEBcompare}
\end{figure}

Next, consider the target has a velocity $\mathbf{v}=[20, 0]^T$ m/s. The velocity is estimated based \eqref{eq:unitVec}-\eqref{eq:estiVel} after the position is estimated.  Since the number of OFDM symbols we choose is rather small, a large oversampling factor $L_D=4096$ is used so that the FFT estimation can approximate the ML estimation. It is observed from Fig.~\ref{F:VEBcompare} that, as the transmit power becomes large, the estimation error approaches the derived VEB, which is the square root of the CRLB in \eqref{eq:CRLBV}. The simulation slightly surpasses the analysis when SNR is large. This is because the analysis is derived based on the assumption of separate position and velocity estimation, while the coherent summation with estimated $\hat{\tau}_B$ is adopted in line~\ref{line:fD} of Algorithm~\ref{alg1}.
\begin{figure}[htb]
\centering
\includegraphics[width=0.4\textwidth]{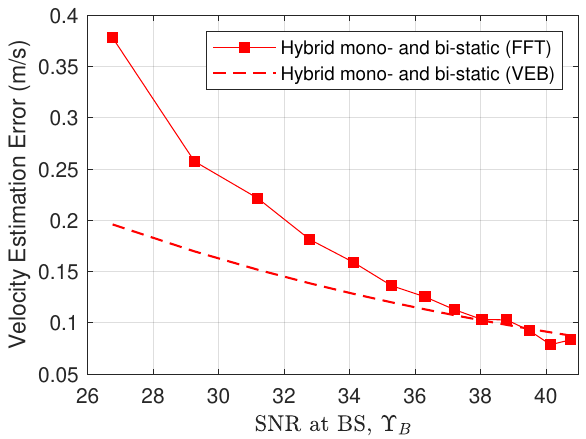}
\caption{Compare the velocity estimation error with VEB.}
\label{F:VEBcompare}
\end{figure}
\end{example}

\section{Coverage Analysis for Hybrid Sensing}\label{sec:utilization}
In this section, we consider the utilization of analytical results in network design. First, the network coverage analysis with selected BS and UE is presented, based on which we can optimize the UE position to maximize the network coverage. Second, considering random UE distribution, the achievable PEB with optimal cooperative UE selection for a given target is presented as a function of UE density.

\subsection{Sensing Coverage Analysis}
The PEB and VEP derived in the preceding sections can be employed to evaluate the sensing coverage of  cellular network under different ISAC operating modes. For example, consider a target located at a random location $\mathbf{q}=[x,y]^T$. The target can be successfully localized in the BS mono-static sensing mode if the PEB at that position falls below a predefined threshold, i.e.,
\begin{align}
\sqrt{C_{\mathrm{mono}}(\mathbf{q})}\leq \gamma_{p}, \label{eq:LocCoverage}
\end{align}
where $\gamma_p$ is the required positioning accuracy. The coverage of the BS mono-static sensing mode is defined as all the potential target positions such that \eqref{eq:LocCoverage} is met, i.e,
\begin{align}
\mathcal{A}_{\mathrm{mono}}=\int\mathbf{1}(\sqrt{C_{\mathrm{mono}}(\mathbf{q})}\leq \gamma_{p}) d\mathbf{q},\label{eq:defCoverage}
\end{align}
where $\mathbf{1}(\cdot)$ is a binary indicator function. According to the analysis presented in the preceding sections,  the CRLB for target localization increases as the target moves farther away from the BS. Hence, the coverage area is the area enclosed by the curve defined by
\begin{align}
C_{\mathrm{mono}}(\mathbf{q})=\gamma_p^2. \label{eq:equalCurve}
\end{align}

Substituting the expression of $C_{\mathrm{mono}}(\mathbf{q})$ in \eqref{eq:CRLBm} into \eqref{eq:equalCurve}, we have
\begin{align}
c_1r_B^6+c_2r_B^4\cos^2(\theta)=\gamma_p^2\cos^2(\theta). \label{eq:areaPolar}
\end{align}
The curve defined by \eqref{eq:areaPolar} shares a similar shape as a lemniscate, but it does not belong to any classic lemniscate. A tight approximate closed-form expression for the area enclosed by \eqref{eq:areaPolar} is presented in the following lemma.

\begin{lemma}
Consider a given PEB threshold $\gamma_{p}$, the coverage of the BS mono-static sensing with $K$ subcarriers, $M$ OFDM symbols and $N_R$ receive antennas is approximately given by
\begin{align}
&\mathcal{A}_{\mathrm{mono}}\approx \nonumber\\
&2.6448\gamma_p^{\frac{2}{3}}\sqrt[3]{N_R(N_R^2-1)KM\bar{\Upsilon}_B}-\frac{0.0327c^2(N_R^2-1)}{\Delta f^2(K^2-1)},\label{eq:MonoCoverage}
\end{align}
\end{lemma}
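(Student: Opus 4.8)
The plan is to evaluate the area enclosed by the lemniscate-like curve \eqref{eq:areaPolar} directly in polar coordinates. Since the defining equation depends on the angle only through $\cos^2\theta$, the boundary radius satisfies $r_{\max}(\theta)=r_{\max}(-\theta)=r_{\max}(\pi-\theta)$, so the curve is a figure-eight consisting of two identical lobes symmetric about both axes. Using the polar area element and this four-fold symmetry, I would write
\begin{align}
\mathcal{A}_{\mathrm{mono}}=\frac{1}{2}\int_0^{2\pi}r_{\max}^2(\theta)\,d\theta=2\int_0^{\pi/2}r_{\max}^2(\theta)\,d\theta,\nonumber
\end{align}
so the whole task reduces to finding $r_{\max}^2(\theta)$ and integrating it over a quarter turn.

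Setting $u=r_B^2$ and $a=\cos^2\theta$, the boundary condition \eqref{eq:areaPolar} becomes the cubic $c_1u^3+c_2au^2-\gamma_p^2a=0$, which admits no tidy closed-form root. The key step is to treat the delay-error coefficient $c_2$ as a small perturbation and expand $u=u_0+c_2u_1+O(c_2^2)$. Matching orders gives the leading balance $u_0=(\gamma_p^2\cos^2\theta/c_1)^{1/3}$ and the first correction $u_1=-\cos^2\theta/(3c_1)$, so that $r_{\max}^2(\theta)\approx(\gamma_p^2/c_1)^{1/3}\cos^{2/3}\theta-(c_2/3c_1)\cos^2\theta$. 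I would then integrate term by term: the leading contribution involves $\int_0^{\pi/2}\cos^{2/3}\theta\,d\theta=\tfrac{\sqrt{\pi}}{2}\Gamma(5/6)/\Gamma(4/3)$, a Beta-function value, while the correction involves the elementary $\int_0^{\pi/2}\cos^2\theta\,d\theta=\pi/4$. This yields $\mathcal{A}_{\mathrm{mono}}\approx(\gamma_p^2/c_1)^{1/3}\sqrt{\pi}\,\Gamma(5/6)/\Gamma(4/3)-\pi c_2/(6c_1)$.

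Finally I would substitute the explicit constants \eqref{eq:C1} and \eqref{eq:C2}. Factoring $c_1^{-1/3}=(\pi^2/6)^{1/3}(N_R(N_R^2-1)KM\bar{\Upsilon}_B)^{1/3}$ and evaluating $(\pi^2/6)^{1/3}\sqrt{\pi}\,\Gamma(5/6)/\Gamma(4/3)\approx2.6448$ reproduces the leading term, while $c_2/c_1=c^2(N_R^2-1)/(16\Delta f^2(K^2-1))$ together with $\pi/96\approx0.0327$ reproduces the subtractive correction, completing the proof. The main obstacle is the absence of an exact root for the cubic: the result is necessarily approximate, and the crux is justifying the perturbative truncation by noting that the effective expansion parameter is $c_2/(c_1^{2/3}\gamma_p^{2/3})$, which remains small for any practically relevant accuracy threshold $\gamma_p$ even though $c_2/c_1$ itself is large. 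Recognizing the fractional-power integral as a Gamma-function value is the other non-routine ingredient.
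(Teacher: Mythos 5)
Your proof is correct and follows essentially the same route as the paper's own derivation: the same polar-area setup exploiting the lobe symmetry, the same substitution $u=r_B^2$ with a first-order perturbation in the $c_2$ term yielding $u\approx (\gamma_p^2/c_1)^{1/3}\cos^{2/3}\theta-\frac{c_2}{3c_1}\cos^2\theta$, the same term-by-term integration via $\int_0^{\pi/2}\cos^{2/3}\theta\,d\theta=\frac{\sqrt{\pi}}{2}\Gamma(5/6)/\Gamma(4/3)$ and $\int_0^{\pi/2}\cos^2\theta\,d\theta=\pi/4$, and the same final substitution of $c_1,c_2$ giving the constants $2.6448$ and $\pi/96\approx 0.0327$. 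Your justification of the truncation via the dimensionless parameter $c_2/(c_1^{2/3}\gamma_p^{2/3})$ is simply a slightly sharper restatement of the paper's condition $\gamma_p^2\gg c_2 r_B^4$, so no gap exists.
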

\begin{proof}
The derivation is presented in Appendix~\ref{A:monoCoverage}.
\end{proof}

Since the second term in \eqref{eq:MonoCoverage} is just a small correction in the approximation, it is usually much smaller than the first term. Hence, we can conclude that the sensing coverage of BS mono-static sensing in OFDM-ISAC is proportional to $\gamma_p^{\frac{2}{3}}$, $(KM)^{\frac{1}{3}}$, and $N_R$  if $N_R\gg 1$.

Next, we consider the coverage area of the hybrid mono- and bi-static  sensing with a given UE position, which is defined as
\begin{align}
\mathcal{A}_h=\int\mathbf{1}\left(\sqrt{C_h(\mathbf{q},\mathbf{q}_U)}\leq \gamma_{p}\right) d\mathbf{q},\label{eq:defCoverage}
\end{align}
where $C_h(\mathbf{q},\mathbf{q}_U)$ is given in \eqref{eq:CRLBjoint_polar}. Deriving the closed-form expression for $\mathcal{A}_h$ is challenging. However, since the PEB can be efficiently calculated for given UE and target position, evaluating the coverage numerically is efficient.
\begin{example}
Under the parameter settings as Example 1, Fig.~\ref{F:PEBwUELoc}(a) illustrates the right-half coverage region of the BS mono-static sensing for a PEB threshold of $\gamma_p=1$ m. The full coverage area is symmetric about the origin and thus twice the area shown in Fig.~\ref{F:PEBwUELoc}(a), accounting for the left half as well. The accuracy of the analytical coverage expression in~\eqref{eq:MonoCoverage} for BS mono-static sensing is validated by the simulation results in Fig.~\ref{F:monoCov}, which demonstrate excellent agreement under various network configuration.

Besides, the coverage area is greatly enhanced with the hybrid mono- and bi-static sensing\footnote{Only the area on the right side of the BS is counted for both BS mono-static and hybrid sensing.},  as shown by the dashed lines in Fig.~\ref{F:monoCov}, with the UE located at $\mathbf{q}_U=[300,0]^T$. The enlarged coverage of the hybrid mono- and bi-static  sensing is negligible when $\gamma_p$ is small and it becomes larger when the PEB threshold gets larger.
\begin{figure}[htb]
\centering
\includegraphics[width=0.4\textwidth]{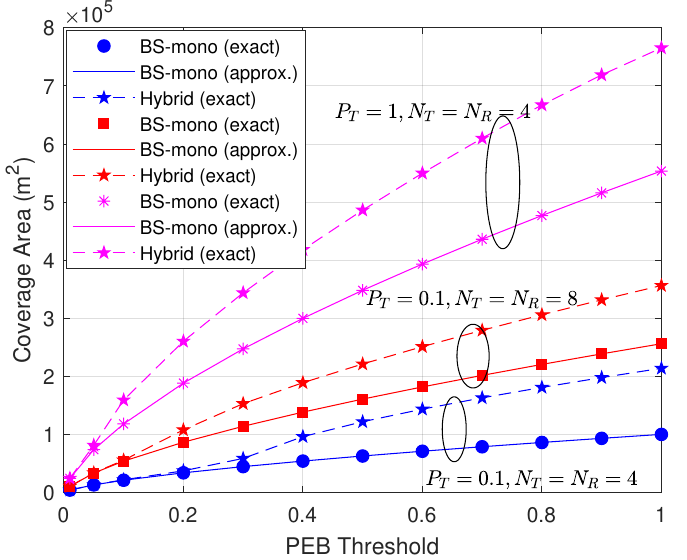}
\caption{Comparison of sensing coverage with various sensing modes.}
\label{F:monoCov}
\end{figure}
\end{example}


In the scenario where the UE serves as a dedicated device to assist the BS in sensing, the UE's deployment can be optimized based on coverage area. For instance, consider a cooperative UE located on the x-axis at position $\mathbf{q}_U=[x_U,0]^T$ and moving away from the BS, as illustrated in Fig.~\ref{F:optUELoc}(a). The sensing coverage of the hybrid mono- and bi-static sensing based on the cooperation of the BS and this UE is plotted in Fig.~\ref{F:optUELoc}(b). It is observed that the coverage initially increases and then decreases as the UE moves farther from the BS. Moreover, the optimal UE position $x_U^*$ shifts farther from the BS as the PEB threshold increases.

\begin{figure}[htb]
\centering
\begin{subfigure}{0.3\textwidth}
\centering
\includegraphics[width=\textwidth]{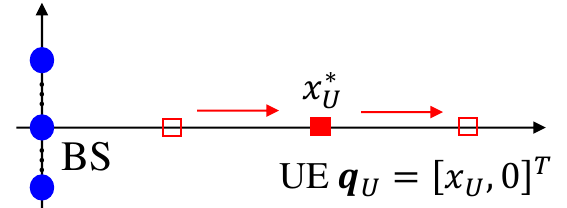}
\caption{Adjusting the UE position on $x$-axis.}
\end{subfigure}
\begin{subfigure}{0.35\textwidth}
\centering
\includegraphics[width=\textwidth]{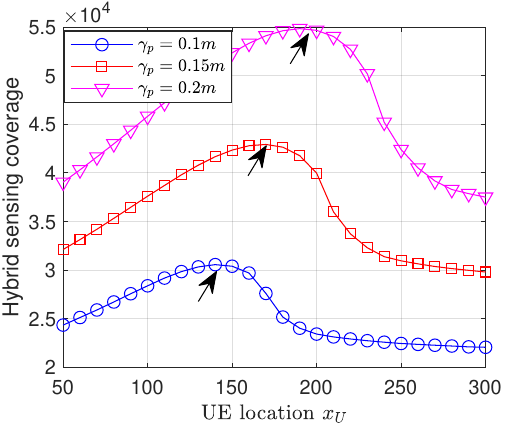}
\caption{Hybrid Sensing coverage}
\end{subfigure}
\caption{Sensing coverage with varying UE positions.}
\label{F:optUELoc}
\end{figure}


The coverage analysis can be readily extended to velocity estimation by replacing $C_h(\mathbf{q},\mathbf{q}_U)$ in \eqref{eq:defCoverage} with $C_\mathbf{v}(\mathbf{q},\mathbf{q}_U)$ in \eqref{eq:CRLBV}. Moreover, when both position and velocity estimation are of interest, the coverage region can be defined as the set of potential target positions for which both position and velocity estimates satisfy their respective accuracy requirements, i.e.,
\begin{align}
\mathcal{A}=\int\mathbf{1}\left(\sqrt{C_h(\mathbf{q},\mathbf{q}_U)}\leq \gamma_{p},\sqrt{C_\mathbf{v}(\mathbf{q},\mathbf{q}_U)}\leq \gamma_{v}\right) d\mathbf{q},\label{eq:defCoveragePV}
\end{align}
where $\gamma_v$ is the sensing threshold for velocity estimation. Although a closed-form expression for this coverage region is not available, numerical evaluation remains highly efficient, by using the closed-form expressions for the CRLB.


\subsection{PEB with UE Selection}
The key advantage of the proposed hybrid mono- and bi-static  sensing mode lies in its ability to leverage measurements from UE. By flexibly selecting an appropriate UE from the large pool of potential UEs in the network, the system can significantly enhance sensing performance. As shown in \eqref{eq:CRLBjoint_polar} and \eqref{eq:CRLBvelocity}, sensing accuracy improves when the selected UE is located close to the target and forms an angle with the BS and the target that is close to the optimal angle $\psi^*$. Assume that the BS knows the rough position of the target and hence it can always select the most proper UE to assist the sensing process. We analyze the distribution of PEB under random UE distribution in this section.

The distribution of UE is modeled by a homogeneous Poisson point (HPPP) process with density $\lambda$ per $m^2$. Considering a typical target at $\mathbf{q}$, if there is no UE to help, i.e., $\lambda=0$, the PEB is fixed and given by $PEB_{\mathrm{mono}}(\mathbf{q})$ in \eqref{eq:PEBmono}. When $\lambda>0$, a lower PEB can be achieved by fusing the measurement results from UE. Besides, if the UE is arbitrarily close to the target, the ultimate PEB lower bound can be obtained from \eqref{eq:CRLBjLimit}, which is
\begin{align}
PEB_h^{\mathrm{lim}}=\sqrt{\min\left\{\frac{c^2}{4\mathcal{I}_{\tau_B}},\frac{r_B^2}{\mathcal{I}_{\theta}}\right\}}.\label{eq:PEBcap}
\end{align}

For general PEB threshold  $PEB_h^{\mathrm{lim}}\leq\gamma_p\leq PEB_{\mathrm{mono}}(\mathbf{q})$, the cumulative distribution function (CDF)  for the PEB distribution is defined as
\begin{align}
F(\gamma_p,\mathbf{q})=\mathrm{Pr}(\exists \mathbf{q}_U: PEB_h(\mathbf{q},\mathbf{q}_U)\leq \gamma_p).\label{eq:probUE}
\end{align}

To derive the probability of UE existence in \eqref{eq:probUE}, we need to find the condition of UE positions for given target position $\mathbf{q}$. Based on the simulation results shown in Fig.~\ref{F:PEBwTargetLoc}, the admitted area of the UE position for given PEB threshold can be approximated by a lemniscate centered at the target. Mathematically, we solve for $\mathbf{q}_U$ with given $\mathbf{q}$ such that
\begin{align}
C_h(\mathbf{q},\mathbf{q}_U)=\gamma_p^2.  \label{eq:solveru}
\end{align}
To extract those parameters related with $\mathbf{q}_U$, denote $\mathcal{I}_{\tau_U}=\tilde{\mathcal{I}}_{\tau_U}/r_U^2$ and hence we have
\begin{align}
\tilde{\mathcal{I}}_{\tau_U}=\frac{2\pi^2\Delta f^2MK(K^2-1)N_R \hat{\Upsilon}_U}{3},\label{eq:ItauUtilde}
\end{align}
and $\hat{\Upsilon}_U=\Upsilon_U/r_U^2$ is the SNR coefficient at UE by excluding the $r_U$ factor.

Substituting \eqref{eq:CRLBjoint_polar} and \eqref{eq:ItauUtilde} into \eqref{eq:solveru}, we can solve the value of $r_U$ for given target position $\mathbf{q}$ as
\begin{align}
r_U^2=&\frac{(4r_B^2\mathcal{I}_{\tau_B}+c^2\mathcal{I}_{\theta})\gamma_p^2(1+\cos\psi)+2c^2\left(\gamma_p^2\mathcal{I}_\theta-r_B^2\right)}
{4r_B^2\mathcal{I}_{\tau_B}+c^2\mathcal{I}_{\theta}-4\gamma_p^2\mathcal{I}_{\tau_B}\mathcal{I}_{\theta}}\nonumber\\
& \cdot\frac{\tilde{\mathcal{I}}_{\tau_U}(1-\cos\psi)}{c^2}, \label{eq:ruExplicit}
\end{align}

If a UE is present within the region enclosed by the curve defined in~\eqref{eq:ruExplicit}, the BS-UE joint sensing mode can achieve a PEB no greater than $\gamma_p$. Under the HPPP model, the probability of UE presence in a given region depends solely on the area of that region, not on its specific shape. In particular, the probability that no UE is located within the region, known as the void probability, is given by
\begin{align}
\Pr(\textnormal{no UE in region } \mathcal{R}_U)=e^{-\lambda |\mathcal R_U|}. \label{eq:void}
\end{align}

Next, we need to evaluate the area enclosed by \eqref{eq:ruExplicit}. Although it is not classic lemniscate, its area can be explicitly calculated based on the standard area equation in polar coordinates. Depending on the value of $\gamma_p$, the admitted area is either enclosed by a full loop (i.e., $\psi\in [0,2\pi]$) or a partial loop (i.e., limited range of $\psi$ that ensures $r_U^2\geq 0$), with the explicit expression given by the following lemma.
\begin{lemma}
For OFDM-ISAC with hybrid mono- and bi-static sensing, a localization variance for a target at  $\mathbf{q}$ lower than $\gamma_p^2$ can be achieved if there exists a cooperative UE within the region enclosed by the curve in \eqref{eq:ruExplicit}, and the area of the enclosed region is given by
\begin{align}
|\mathcal R_U|=
\begin{cases}
f_1(\gamma_p^2,\mathbf{q})\left(P_1-\frac{P_2}{2}\right), & \gamma_p^2\geq \frac{r_B^2}{\mathcal{I}_\theta}\\
f_1(\gamma_p^2,\mathbf{q})f_2(P_1,P_2), & \gamma_p^2< \frac{r_B^2}{\mathcal{I}_\theta}
\end{cases} \label{eq:AreaUE}
\end{align}
where
\begin{align}
&f_1(\gamma_p^2,\mathbf{q})=\frac{\pi\tilde{\mathcal{I}}_{\tau_U}/c^2}{C_{\mathrm{mono}}(\mathbf{q})-\gamma_p^2}\\
&P_1=\gamma_p^2\left(\frac{r_B^2}{\mathcal{I}_\theta}+\frac{c^2}{4\mathcal{I}_{\tau_B}}\right)-\frac{r_B^2c^2}{2\mathcal{I}_{\tau_B}\mathcal{I}_\theta}\label{eq:P1}\\
&P_2=\gamma_p^2\left(\frac{r_B^2}{\mathcal{I}_\theta}-\frac{c^2}{4\mathcal{I}_{\tau_B}}\right).\label{eq:P2}\\
&f_2(P_1,P_2)=\left(P_1-\frac{P_2}{2}\right)\arccos\left(-\frac{P_1}{P_2}\right)\nonumber\\
&\quad \quad \quad \quad \quad +\left(P_2-\frac{P_1}{2}\right)\sqrt{1-\frac{P_1^2}{P_2^2}}
\end{align}
\end{lemma}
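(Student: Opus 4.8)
The plan is to read \eqref{eq:ruExplicit} as the polar equation of the boundary $\partial\mathcal{R}_U$, using coordinates centered at the target in which $r_U=\|\mathbf{q}_U-\mathbf{q}\|$ is the radial distance and $\psi$ (the BS-target-UE angle) is the polar angle measured from the target-to-BS ray. For a fixed target $\mathbf{q}$ the quantities $r_B$, $\mathcal{I}_{\tau_B}$ and $\mathcal{I}_\theta$ are constants, so \eqref{eq:ruExplicit} specifies $r_U^2$ as a function of $\psi$ alone, and since it depends on $\psi$ only through $\cos\psi$ the region is symmetric about the BS-target line. I would therefore obtain the area directly from the polar-area formula $|\mathcal{R}_U|=\frac{1}{2}\int r_U^2\,d\psi$, integrated over the angles for which $r_U^2\ge 0$.

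First I would simplify the integrand. Distributing the overall $(1-\cos\psi)$ factor of \eqref{eq:ruExplicit} against the numerator, the part proportional to $(1+\cos\psi)$ becomes proportional to $\sin^2\psi$ through $(1+\cos\psi)(1-\cos\psi)=\sin^2\psi$, while the constant part of the numerator merely picks up $(1-\cos\psi)$, so the integrand is a polynomial of degree two in $\cos\psi$. Substituting the identity $4r_B^2\mathcal{I}_{\tau_B}+c^2\mathcal{I}_\theta=4\mathcal{I}_{\tau_B}\mathcal{I}_\theta\,C_{\mathrm{mono}}(\mathbf{q})$, which follows from \eqref{eq:CRLBm}, turns the denominator into $4\mathcal{I}_{\tau_B}\mathcal{I}_\theta\big(C_{\mathrm{mono}}(\mathbf{q})-\gamma_p^2\big)$ and exposes the common prefactor $f_1(\gamma_p^2,\mathbf{q})$. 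Writing the remaining factor in the equivalent form $(P_1+P_2\cos\psi)(1-\cos\psi)$ and matching the constant and $\cos\psi$ parts against \eqref{eq:P1}-\eqref{eq:P2} identifies the coefficients $P_1$ and $P_2$.

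Next I would fix the range of integration, which is exactly what splits the result into two cases. Because $1-\cos\psi\ge 0$ for every $\psi$, the feasibility constraint $r_U^2\ge 0$ is governed solely by the affine factor $P_1+P_2\cos\psi$, whose minimum over $\psi$ is attained at $\psi=\pi$ and equals $P_1-P_2$. A short computation shows $P_1-P_2$ is a positive multiple of $\gamma_p^2-r_B^2/\mathcal{I}_\theta$, so the factor stays nonnegative for all $\psi$ exactly when $\gamma_p^2\ge r_B^2/\mathcal{I}_\theta$. In that regime the curve closes into a full loop and I integrate over $\psi\in[0,2\pi]$, where $\int\cos\psi\,d\psi=0$ and $\int\cos^2\psi\,d\psi=\pi$ collapse the integral to the first branch $f_1\big(P_1-P_2/2\big)$. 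Otherwise the factor changes sign, the admissible angles shrink to the sub-arc $|\psi|\le\psi_0$ with $\psi_0=\arccos(-P_1/P_2)$, the loop pinches shut at the target, and I integrate the same degree-two integrand over $[-\psi_0,\psi_0]$.

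I expect the partial-loop branch to be the main obstacle. There the $\cos^2\psi$ term integrates to a piece proportional to $\psi_0$ plus a piece proportional to $\sin\psi_0\cos\psi_0$, and the care lies in substituting $\cos\psi_0=-P_1/P_2$ and $\sin\psi_0=\sqrt{1-P_1^2/P_2^2}$ so that these recombine with the constant and linear integrals into exactly the $\arccos$ term with coefficient $(P_1-P_2/2)$ and the square-root term with coefficient $(P_2-P_1/2)$ appearing in $f_2(P_1,P_2)$. As a final check that pins down the overall multiplicative constant, I would take the limit $\gamma_p^2\to r_B^2/\mathcal{I}_\theta$ from below, where $\psi_0\to\pi$ and $\sin\psi_0\to 0$, and require the partial-loop expression to reduce to the full-loop value $f_1\big(P_1-P_2/2\big)$; enforcing this continuity across the two branches of \eqref{eq:AreaUE} at their common boundary fixes the normalization.
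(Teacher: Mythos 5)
Your strategy---target-centered polar integration of $r_U^2(\psi)$ over the angles where it is nonnegative, with the case split determined by the sign of an affine factor in $\cos\psi$---is precisely the ``standard polar area integration'' that the paper's (omitted) proof invokes, so the approach is the intended one. However, two of your concrete steps fail when actually executed. First, the coefficient matching you assert does not hold for \eqref{eq:ruExplicit} as printed. Dividing its numerator and denominator by $4\mathcal{I}_{\tau_B}\mathcal{I}_\theta$ (your identity $4r_B^2\mathcal{I}_{\tau_B}+c^2\mathcal{I}_\theta=4\mathcal{I}_{\tau_B}\mathcal{I}_\theta C_{\mathrm{mono}}(\mathbf{q})$ is correct), the constant and $\cos\psi$ coefficients come out as $P_1+\delta$ and $P_2+\delta$ with $\delta=\frac{c^2\gamma_p^2}{2\mathcal{I}_{\tau_B}}\neq 0$, not as $P_1$ and $P_2$; the full-loop area would then be $f_1\bigl(P_1-\frac{P_2}{2}+\frac{\delta}{2}\bigr)$, contradicting \eqref{eq:AreaUE}. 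The printed curve \eqref{eq:ruExplicit} is in fact inconsistent with the lemma: re-solving $C_h(\mathbf{q},\mathbf{q}_U)=\gamma_p^2$ directly from \eqref{eq:CRLBjoint_polar} with $\mathcal{I}_{\tau_U}=\tilde{\mathcal{I}}_{\tau_U}/r_U^2$ gives
\begin{equation*}
r_U^2=\frac{\tilde{\mathcal{I}}_{\tau_U}/c^2}{C_{\mathrm{mono}}(\mathbf{q})-\gamma_p^2}\,(1+\cos\psi)\left(P_1-P_2\cos\psi\right),
\end{equation*}
so the lobes pinch at $\psi=\pi$ (the far side of the target), not at $\psi=0$. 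Your integrand is the reflection $\psi\mapsto\pi-\psi$ of this, so your areas happen to be unaffected, but a valid proof must re-derive the curve from \eqref{eq:CRLBjoint_polar} rather than match coefficients against \eqref{eq:ruExplicit}, where the matching visibly breaks down.

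Second, your deferred continuity check cannot ``fix the normalization''; it exposes an inconsistency in the statement instead. Carrying out your partial-loop integral with $\psi_0=\arccos(-P_1/P_2)$ yields exactly
\begin{equation*}
\frac{1}{2}\int_{-\psi_0}^{\psi_0}\frac{f_1}{\pi}\left(P_1+P_2\cos\psi\right)\left(1-\cos\psi\right)d\psi=\frac{f_1}{\pi}\,f_2(P_1,P_2),
\end{equation*}
i.e.\ $1/\pi$ times the stated second branch. At the common boundary $\gamma_p^2=r_B^2/\mathcal{I}_\theta$ (where $P_1=P_2$), the first branch equals $f_1P_1/2$, while the stated second branch equals $f_1\cdot\frac{P_1}{2}\arccos(-1)=\frac{\pi f_1P_1}{2}$: the two branches of \eqref{eq:AreaUE} as written are discontinuous at their junction, so no choice of constant consistent with your (correct) full-loop result can reproduce $f_1f_2$. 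Your computation is the right one; the second branch should read $\frac{f_1}{\pi}f_2(P_1,P_2)$ (equivalently, $f_2$ needs an overall factor $1/\pi$), and a complete write-up must state this correction rather than leave the constant to be pinned down by a continuity argument that fails for the printed formula. A minor further point: both your feasibility analysis and the lemma's case split tacitly assume $P_2>0$, i.e.\ $r_B^2/\mathcal{I}_\theta>c^2/(4\mathcal{I}_{\tau_B})$; if $P_2<0$ the minimum of the affine factor moves to the other endpoint and the full-loop threshold becomes $\gamma_p^2\geq c^2/(4\mathcal{I}_{\tau_B})$, a case your argument should at least mention.
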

\begin{proof}
The derivation of \eqref{eq:AreaUE} follows by standard polar area integration with \eqref{eq:ruExplicit}, and hence is omitted for brevity.
\end{proof}

%


Substituting \eqref{eq:AreaUE} and \eqref{eq:void} into \eqref{eq:probUE}, we obtain the CDF of an arbitrary target in the network under BS-UE joint sensing with proper UE selection.
\begin{lemma}
Consider a network where the UEs are  distributed according to HPPP with density $\lambda$, the PEB for locating a target at $\mathbf{q}$ with hybrid mono- and bi-static  sensing via the best UE selection, denoted by $\gamma_p$, has CDF given by
\begin{align}
F(\gamma_p,\mathbf{q})=
\begin{cases}
0 & \gamma_p\leq PEB_h^{\mathrm{lim}}\\
1 &  \gamma_p\geq PEB_m(\mathbf{q})\\
1-e^{-\lambda |\mathcal R_U|}, & \textnormal{otherwise},
\end{cases}\label{eq:CDF}
\end{align}
where $|\mathcal R_U|$ is given by \eqref{eq:AreaUE}.
\end{lemma}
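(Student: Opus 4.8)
The plan is to reduce the statement to the two ingredients already available: the admissible-region area $|\mathcal{R}_U|$ established in \eqref{eq:AreaUE} and the Poisson void probability \eqref{eq:void}. First I would make ``best UE selection'' precise: for a realized UE point pattern, the achievable position error bound equals $\min_{\mathbf{q}_U} PEB_h(\mathbf{q},\mathbf{q}_U)$ over the UEs present, with the standing option that the BS ignores all UEs and falls back to mono-static sensing. The event that accuracy $\gamma_p$ is attainable is therefore exactly $\{\exists\,\mathbf{q}_U: PEB_h(\mathbf{q},\mathbf{q}_U)\le \gamma_p\}$ as in \eqref{eq:probUE}, together with the mono-static fallback. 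The key structural fact I would verify is that $C_h(\mathbf{q},\mathbf{q}_U)$ is monotone increasing in $r_U$ for every fixed $\psi$: since $\mathcal{I}_{\tau_U}\propto 1/r_U^2$ decreases with $r_U$, and since raising $\mathcal{I}_{\tau_U}$ raises $\mathbf{I}(\Theta_1)$ in the positive-semidefinite order, the matrix $(\mathbf{J}_{\mathbf{q}}^T \mathbf{I}(\Theta_1)\mathbf{J}_{\mathbf{q}})^{-1}$ decreases in that order and so does its trace $C_h$. This monotonicity identifies the sublevel set $\{\mathbf{q}_U: C_h\le\gamma_p^2\}$ with the interior of the curve $r_U=r_U(\psi)$ solved in \eqref{eq:ruExplicit}, i.e.\ the region $\mathcal{R}_U$ whose area is \eqref{eq:AreaUE}.

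With this identification I would dispatch the two boundary branches. For the upper branch $\gamma_p\ge PEB_{\mathrm{mono}}(\mathbf{q})$, the nonnegativity of the net CRLB reduction $\Delta_C(\mathbf{q},\mathbf{q}_U)\ge 0$ in \eqref{eq:netExp} gives $PEB_h\le PEB_{\mathrm{mono}}$ for every UE, while the mono-static fallback already meets the threshold regardless of the point pattern; hence the event holds surely and $F=1$. For the lower branch $\gamma_p\le PEB_h^{\mathrm{lim}}$, the fundamental limit \eqref{eq:CRLBjLimit}--\eqref{eq:PEBcap} shows $PEB_h>PEB_h^{\mathrm{lim}}$ for every finite $r_U>0$, so no placement of a cooperating UE can meet the threshold; equivalently $r_U(\psi)^2\le 0$ for all $\psi$ in \eqref{eq:ruExplicit}, the region $\mathcal{R}_U$ is empty, and $F=0$.

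For the intermediate branch $PEB_h^{\mathrm{lim}}<\gamma_p<PEB_{\mathrm{mono}}(\mathbf{q})$, I would invoke the defining property of the HPPP: the number of points in any fixed Borel set is Poisson with mean $\lambda$ times its area, so the void probability depends on $\mathcal{R}_U$ only through $|\mathcal{R}_U|$ and not through its (non-lemniscate) shape. Applying \eqref{eq:void} gives $\mathrm{Pr}(\text{no UE in }\mathcal{R}_U)=e^{-\lambda|\mathcal{R}_U|}$, whence $F(\gamma_p,\mathbf{q})=\mathrm{Pr}(\exists\,\mathbf{q}_U\in\mathcal{R}_U)=1-e^{-\lambda|\mathcal{R}_U|}$ with $|\mathcal{R}_U|$ read off from \eqref{eq:AreaUE}, which is precisely the middle branch of \eqref{eq:CDF}.

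The step I expect to be the main obstacle is the clean identification of $\mathcal{R}_U$ with the interior of \eqref{eq:ruExplicit} uniformly across the regimes of $\gamma_p$: one must confirm the $r_U$-monotonicity of $C_h$ for every $\psi$ and then reconcile it with the two area cases, namely the full loop when $\gamma_p^2\ge r_B^2/\mathcal{I}_\theta$ versus the partial loop when $\gamma_p^2< r_B^2/\mathcal{I}_\theta$, where only a restricted angular sector admits a positive $r_U^2$. The remaining delicate point is the degenerate endpoint $\gamma_p=PEB_h^{\mathrm{lim}}$, at which $\mathcal{R}_U$ collapses to zero area so that $F$ is continuous at the left boundary; everything else then follows from the already-proven area formula and the standard Poisson void probability.
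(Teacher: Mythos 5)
Your proof is correct and takes essentially the same route as the paper: the paper's own justification is a one-line substitution of the admissible-region area \eqref{eq:AreaUE} and the Poisson void probability \eqref{eq:void} into the definition \eqref{eq:probUE}, with the two boundary branches following from $\Delta_{C}(\mathbf{q},\mathbf{q}_U)\geq 0$ and the limit \eqref{eq:PEBcap}, exactly as you argue. The monotonicity of $C_h$ in $r_U$ for fixed $\psi$ that you establish via the positive-semidefinite ordering is left implicit in the paper, and it is precisely what legitimizes identifying the sub-level set $\{\mathbf{q}_U: C_h(\mathbf{q},\mathbf{q}_U)\le\gamma_p^2\}$ with the region enclosed by the curve \eqref{eq:ruExplicit}, so your addition strengthens rather than departs from the paper's argument.
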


\begin{example}
To interpret the results in \eqref{eq:CDF}, we consider the sensing of a target at $\mathbf{q}=[200,50]^T$ . The PEB is 0.9134m under BS mono-static setting. If there is a UE arbitrarily close to the target, the hybrid mono- and bi-static  sensing achieves the PEB 0.1081m. For random UE distribution, the CDF of achievable PEB for varying UE density is shown in Fig.~\ref{F:PEBcdf}. When the UE density is high, it is very likely to find a cooperating UE close to the target, and hence the PEB is smaller.
\begin{figure}[htb]
\centering
\includegraphics[width=0.4\textwidth]{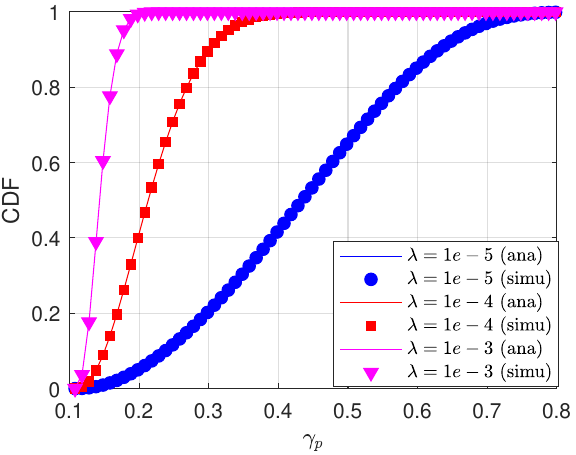}
\caption{The CDF of PEB for target at $\mathbf{q}=[200,50]^T$ with various UE density.}
\label{F:PEBcdf}
\end{figure}
\end{example}

\section{Conclusion}\label{sec:conclusion}
In this paper, we provided the closed-form expressions of CRLB for target position and velocity estimation with hybrid mono- and bi-static sensing, enabled by BS-UE cooperation, which were derived as explicit functions of the target and UE positions and validated using practical sensing algorithms. The analysis revealed that the performance enhancement is most pronounced when the BS-target-UE geometry approaches the optimal bi-static angle, and it vanishes entirely for both PEB and VEB when the three nodes are collinear. These simple expressions provide a tractable way to assess the intrinsic sensing capability of cellular networks with ISAC functionality, e.g., enabling evaluations of sensing coverage and the impact of user density on sensing performance. The analytical results also provide valuable guidelines for practical network design, such as UE selection for cooperative sensing.

For simplicity, the analysis assumed separate estimation of position and velocity. Future work could extend this framework to joint position-velocity estimation and multi-target scenarios. Additionally, fusing measurements from multiple UEs and considering the UE with 3D location, e.g., UAVs, present other promising directions for investigation.

\bibliographystyle{ieeetr}
\bibliography{IEEEabrv,ISACTutorial}
\appendices
\section{CRLB for Localization with Hybrid Sensing}\label{A:positionCRLB}
With the EFIM being a diagonal matrix,  the matrix product $\mathbf{J}_{\mathbf{q}}^T\mathbf{I}(\Theta_1)\mathbf{J}_{\mathbf{q}}$ is a symmetric matrix of size $2\times 2$, which can be represented as
\begin{align}
\mathbf{J}_{\mathbf{q}}^T\mathbf{I}(\Theta_1)\mathbf{J}_{\mathbf{q}}=\left[
\begin{matrix}
A & B\\
B & C
\end{matrix}\right],
\end{align}
where
\begin{align}
A&={J}_{11}^2\mathcal{I}_{\tau_B}+{J}_{21}^2\mathcal{I}_{\theta}+{J}_{31}^2\mathcal{I}_{\tau_U} \label{eq:A}\\
B&=J_{11}{J}_{12}\mathcal{I}_{\tau_B}+{J}_{21}{J}_{22}\mathcal{I}_{\theta}+{J}_{31}{J}_{32}\mathcal{I}_{\tau_U}\label{eq:B}\\
C&={J}_{12}^2\mathcal{I}_{\tau_B}+{J}_{22}^2\mathcal{I}_{\theta}+{J}_{32}^2\mathcal{I}_{\tau_U},\label{eq:C}
\end{align}
with $J_{ij}$ being the $(i,j)$th element of Jacobian matrix $\mathbf{J}(\mathbf{q})$. Then, the trace of its inverse is given by
\begin{align}
\mathrm{Tr}\left((\mathbf{J}_{\mathbf{q}}^T\mathbf{I}(\Theta_1)\mathbf{J}_{\mathbf{q}})^{-1}\right)=\frac{A+C}{AC-B^2}. \label{eq:traceABC}
\end{align}
Substituting \eqref{eq:A}-\eqref{eq:C} into \eqref{eq:traceABC} gives the explicit expression of the CRLB in terms of the coordinates of target and UE.
To further simplify the expression, we convert the expression to the polar coordinates representation, by substituting
\begin{align*}
&x=r_B\cos(\theta), \quad \quad \quad y=r_B\sin(\theta)\\
&x_U=x+r_U\cos(\phi), \quad y_U=y+r_U\sin(\phi)
\end{align*}
where $\phi$ is the direction of the UE with respect to the target. Then, the CRLB for hybrid position estimation becomes \eqref{eq:CRLBjoint_polar}, with $\psi\triangleq\pi-(\theta-\phi)=\arccos\left(\frac{\mathbf{q}^T(\mathbf{q}_U-\mathbf q)}{\|\mathbf{q}\|\|\mathbf{q}_U\|}\right)$ being the angle formed by BS-target-UE.

\section{Derive the Coverage of BS Mono-static Sensing}\label{A:monoCoverage}
With the BS antenna array on the $y$-axis, the coverage area of BS mono-static sensing has two symmetric lobes on two sides of $y$-axis. We find area of one lobe using the standard polar area formula and then multiply the results by 2 to get the total area. Mathematically, we have
\begin{align}
\mathcal{A}_m=2\cdot\frac{1}{2}\int_{-\pi/2}^{\pi/2}(r_B(\theta))^2d\theta.  \label{eq:areaPolar2}
\end{align}

To solve for $r_B(\theta)$, we let $u=r_B^2$ and \eqref{eq:areaPolar} reduces to
\begin{align}
u^3+\frac{c_2}{c_1}u^2\cos^2(\theta)=\frac{\gamma_p^2}{c_1}\cos^2(\theta). \label{eq:uEq}
\end{align}

Solving \eqref{eq:uEq} is challenging in general. However, when $c_2/c_1\rightarrow 0$, \eqref{eq:uEq} can be easily solved with solution
\begin{align}
u_0=\sqrt[3]{\gamma_p^2\cos^2(\theta)/c_1}=\gamma_p^{\frac{2}{3}}c_1^{-\frac{1}{3}}\cos^{\frac{2}{3}}(\theta).
\end{align}

In practice, the expected target localization error should be much larger than the distance estimation error, since the former is much more stringent. Hence, we have $\gamma_p\gg c\sqrt{\mathrm{var}(\hat\tau_B)}/2$, where $\sqrt{\mathrm{var}(\hat\tau_B)}$ is the standard deviation for estimating the delay $\tau_B$. Further consider that $\mathcal{I}_{\tau_B}=\frac{1}{\mathrm{var}(\hat\tau_B)}$ and  $c_2=\frac{c^2}{4r_B^4\mathcal{I}_{\tau_B}}$, we have $\gamma_p^2\gg c_2r_B^4=c_2u^2$, which is  equivalent to $\gamma_p^2/c_1\gg c_2/c_1$.  Hence, the small value of $\frac{c_2}{c_1}u^2\cos^2(\theta)$ can be considered as a perturbation of the solution. Denote by $u_1$ the correction introduced, with $u_1\ll u_0$. Substituting the solution $u=u_0+u_1$ into \eqref{eq:uEq}, we can solve for $u_1$ as
\begin{align}
u_1=-\frac{c_2}{3c_1}\cos^2(\theta).
\end{align}
Hence, the solution to \eqref{eq:uEq} can be approximated as
\begin{align}
u\approx u_0+u_1=\gamma_p^{\frac{2}{3}}c_1^{-\frac{1}{3}}\cos^{\frac{2}{3}}(\theta)-\frac{c_2}{3c_1}\cos^2(\theta). \label{eq:approxu}
\end{align}

Substituting \eqref{eq:approxu} into \eqref{eq:areaPolar2}, and exploiting the symmetric property of the lobe with respect to $x$-axis, we have
\begin{align}
&\mathcal{A}_m\approx 2 \gamma_p^{\frac{2}{3}}c_1^{-\frac{1}{3}}\int_{0}^{\pi/2}{\cos^{\frac{2}{3}}(\theta)}d\theta-\frac{2c_2}{3c_1}\int_{0}^{\pi/2}{\cos^2(\theta)}d\theta\nonumber\\
&\overset{(a)}{=}\gamma_p^{\frac{2}{3}}c_1^{-\frac{1}{3}}\sqrt{\pi}\frac{\Gamma(5/6)}{\Gamma(4/3)}-\frac{\pi}{6}\frac{c_2}{c_1},\label{eq:areainAB}
\end{align}
where $(a)$ follows from the integration formulas in \cite{InteFormula} and $\Gamma(\cdot)$ is the Gamma function. Substituting the expression of $c_1,c_2$ and evaluating the numerical value for constants in \eqref{eq:areainAB} gives \eqref{eq:MonoCoverage}.


\end{document}